\newtheorem{theorem}{Theorem}
\newtheorem{lemma}{Lemma}
\newcommand{\sq}{\hbox{\rlap{$\sqcap$}$\sqcup$}}
\newcommand{\qed}{\hspace*{\fill}\sq}
\newenvironment{proof}{\noindent\textbf{Proof.}\ }{\qed\par\vskip 4mm\par}
\title{Super-Fast Distributed Algorithms for Metric Facility Location}
\date{}
\author{Andrew Berns}
\author{James Hegeman} 
\author{Sriram V. Pemmaraju
\thanks{This work is supported in part by National Science Foundation grant
CCF 0915543. This is a full version of a paper that appeared in ICALP 2012 and includes proofs
missing from that paper due to space restrictions.}}
\affil{Department of Computer Science\\
The University of Iowa\\
Iowa City, Iowa 52242-1419, USA\\
\texttt{[andrew-berns,james-hegeman,sriram-pemmaraju]@uiowa.edu}
}
\begin{document}

\maketitle

\begin{abstract}
This paper presents a distributed $O(1)$-approximation algorithm, with
expected-$O(\log \log n)$ running time, in the $\mathcal{CONGEST}$ model for the
metric facility location problem on a size-$n$ clique network. Though metric
facility location has been considered by a number of researchers in low-diameter
settings, this is the first sub-logarithmic-round algorithm for the problem that
yields an $O(1)$-approximation in the setting of non-uniform facility opening
costs. In order to obtain this result, our paper makes three main technical
contributions. First, we show a new lower bound for metric facility location,
extending the lower bound of B\u{a}doiu et al. (ICALP 2005) that applies only to
the special case of uniform facility opening costs. Next, we demonstrate a
reduction of the distributed metric facility location problem to the problem of
computing an $O(1)$-ruling set of an appropriate spanning subgraph. Finally, we
present a sub-logarithmic-round (in expectation) algorithm for computing a
$2$-ruling set in a spanning subgraph of a clique. Our algorithm accomplishes
this by using a combination of randomized and deterministic sparsification.
\end{abstract}

\section{Introduction}
This paper explores the design of ``super-fast'' distributed algorithms in
settings in which bandwidth constraints impose severe restrictions on the volume
of information that can quickly reach an individual node. As a starting point
for our exploration, we consider networks of diameter one (i.e., cliques) so as
to focus on bandwidth constraints only and avoid latencies imposed by distance
between nodes in the network. We assume the standard $\mathcal{CONGEST}$ model
\cite{PelegBook}, which is a synchronous message-passing model in which each
node in a size-$n$ network can send a message of size $O(\log n)$ along each
incident communication link in each round. By ``super-fast'' algorithms we mean
algorithms whose running time is strictly sub-logarithmic, in any sense --
deterministic, in expectation, or with high probability (w.h.p.). Several
researchers have previously considered the design of such ``super-fast''
algorithms; see \cite{LenzenWattenhofer10,LPPP05,PattShamirTeplitsky11} for
recent examples of relevant results. The working hypothesis is that in
low-diameter settings, where congestion, rather than distance between nodes, is
the main bottleneck, we should be able to design algorithms that are much faster
than corresponding algorithms in high-diameter settings.

The focus of this paper is the \textit{distributed facility location} problem,
which has been considered by several researchers
\cite{GLS06,MoscibrodaWattenhofer05,PanditPemmaraju09,PanditPemmaraju10} in
low-diameter settings. We first describe the sequential version of the problem.
The input to the facility location problem consists of a set of
\textit{facilities} $\mathcal{F} = \{x_1, x_2, \ldots, x_m\}$, a set of
\textit{clients} $\mathcal{C} = \{y_1, y_2, \ldots, y_n\}$, an
\textit{opening cost} $f_i$ associated with each facility $x_i$, and a
\textit{connection cost} $D(x_i,y_j)$ between each facility $x_i$ and client
$y_j$. The goal is to find a subset $F \subseteq \mathcal{F}$ of facilities to
\textit{open} so as to minimize the facility opening costs plus connection
costs, i.e.,
\[FacLoc(F) := \sum_{x_i \in F} f_i + \sum_{y_j \in \mathcal{C}} D(F,y_j)\]
where $D(F,y_j) := \min_{x_i \in F} D(x_i,y_j)$. Facility location is an old and
well-studied problem in operations research
\cite{Balinski66,CNWBook,EHK77,HamburgerKuehn63,Stollsteimer63} that arises in
contexts such as locating hospitals in a city or locating distribution centers
in a region.

The \textit{metric facility location} problem is an important special case of
facility location in which the connection costs satisfy the following
``triangle inequality:'' for any $x_i, x_{i'} \in \mathcal{F}$ and
$y_j, y_{j'} \in \mathcal{C}$,
$D(x_i,y_j) + D(y_j,x_{i'}) + D(x_{i'},y_{j'}) \geq D(x_i,y_{j'})$. The facility
location problem, even in its metric version, is NP-complete and finding
approximation algorithms for the problem has been a fertile area of research. A
series of constant-factor approximation algorithms have been proposed for the
metric facility location problem, with a steady improvement in the approximation
factor. See \cite{Li11} for a recent 1.488-approximation algorithm. This result
is near-optimal because it is known \cite{GuhaKhuller98} that the metric
facility location problem has no polynomial-time algorithm yielding an
approximation guarantee better than 1.463 unless
$NP \subseteq DTIME(n^{O(\log \log n)})$. For non-metric facility location, a
simple greedy algorithm yields an $O(\log n)$-approximation, and this is also
optimal (to within a constant factor) because it is easy to show that the
problem is at least as hard as set cover.

More recently, the facility location problem has been used as an abstraction for
the problem of locating resources in a wireless network
\cite{FrankBook,PanditPemmarajuICDCN09}. Motivated by this application, several
researchers have considered the facility location problem in a distributed
setting. In \cite{MoscibrodaWattenhofer05,PanditPemmaraju09,PanditPemmaraju10},
the underlying communication network is a complete bipartite graph with
$\mathcal{F}$ and $\mathcal{C}$ forming the bipartition. At the beginning of the
algorithm, each node, whether it is a facility or a client, has knowledge of the
connection costs between itself and all nodes in the other part. In addition,
the facilities know their opening costs. In \cite{GLS06}, the underlying
communication network is a clique. Each node in the clique may choose to open as
a facility, and each node that does not open will connect to an open facility.
Note that all of the aforementioned work assumes the $\mathcal{CONGEST}$ model
of distributed computation. The facility location problem considered in
\cite{PanditPemmarajuICDCN09} assumes that the underlying communication network
is a \textit{unit disk graph} (UDG). The algorithm presented in that paper
ignores bandwidth constraints and works only in the $\mathcal{LOCAL}$ model
\cite{PelegBook}. While a UDG can have high diameter relative to the number of
nodes in the network, the authors \cite{PanditPemmarajuICDCN09} reduce the UDG
facility location problem to a collection of low-diameter facility location-type
problems, providing additional motivation for the current work.

None of the prior papers, however, achieve near-optimal approximation (i.e.,
constant-factor in the case of metric facility location and $O(\log n)$-factor
for non-metric facility location) in \textit{sub-logarithmic} rounds. While
\cite{GLS06} does present a \textit{constant-round}, constant-factor
approximation to metric facility location on a clique, it is only for the
special case of \textit{uniform} metric facility location, i.e., when all
facility opening costs are identical. The question that drives this paper, then,
is: Can we develop a distributed constant-factor approximation algorithm for the
metric facility location problem in the clique setting that runs in strictly
sub-logarithmic time? One can ask similar questions in the bipartite setting and
for non-metric facility location as well, but as a first step we focus on the
metric version of the facility location problem on a clique.

Distributed facility location is challenging even in low-diameter settings
because the input consists of $\Theta(n^2)$ information (there are $\Theta(n^2)$
connection costs), distributed across the network, which cannot quickly be
delivered to a single node (or even a small number of nodes) due to the
bandwidth constraints of the $\mathcal{CONGEST}$ model. Therefore, any fast
distributed algorithm for the problem must be truly distributed and must take
advantage of the available bandwidth, as well as structural properties of
approximate solutions. Also worth noting is that even though tight lower bounds
on the running times of distributed approximation algorithms have been
established \cite{KMW10}, none of these bounds extend to the low-diameter
setting considered in this paper. Thus, at the outset it was unclear if a
sub-logarithmic round algorithm providing a constant-factor approximation was
even possible for the facility location problem.

\paragraph{Main result.} The main result of this paper is an
$O(1)$-approximation algorithm, running in expected-$O(\log \log n)$ rounds in
the $\mathcal{CONGEST}$ model, for metric facility location on a size-$n$
clique. If the metric satisfies additional properties (e.g., it has constant
doubling dimension), then we obtain an $O(\log^* n)$-round $O(1)$-approximation
for the problem. Our results are achieved via a combination of techniques that
include (i) a new constant-factor lower bound on the optimal cost of metric
facility location and (ii) a randomized sparsification technique that leverages
the available bandwidth to (deterministically) process sparse subgraphs. For
ease of exposition, we assume that numbers in the input (e.g., connection and
opening costs) can each be represented in $O(\log n)$ bits and thus can be
communicated over a link in $O(1)$ rounds in the $\mathcal{CONGEST}$ model.

\subsection{Technical Overview of Contributions}

We start by precisely stating the distributed facility location problem on a
clique, as in \cite{MettuPlaxton03,GLS06}. Let $(X, D)$ be a discrete metric
space with point set $X = \{x_1, x_2, \ldots, x_n\}$. Let $f_i$ be the opening
cost of $x_i$. We view the metric space $(X, D)$ as a completely-connected
size-$n$ network $C = (X, E)$ with each point $x_i$ represented by a node (which
we also call $x_i$) and with $E$ representing the set of all pairwise
communication links. Each node $x_i$ knows $f_i$ and the connection costs
(distances) $D(x_i,x_j)$ for all $x_j \in X$. The problem is to design a
distributed algorithm that runs on $C$ in the $\mathcal{CONGEST}$ model and
produces a subset $F \subseteq X$ such that each node $x_i \in F$ opens and
provides services as a facility, and each node $x_i \notin F$ connects to the
nearest open node. The goal is to guarantee that
$FacLoc(F) \leq \alpha \cdot OPT$, where $OPT$ is the cost of an optimal
solution to the given instance of facility location and $\alpha$ is some
constant. We call this the \textsc{CliqueFacLoc} problem. Of course, we also
want our algorithm to be ``super-fast'' and terminate in $o(\log n)$ rounds. In
order to obtain the result described earlier, our paper makes three main
technical contributions.

\begin{enumerate}
\item\textbf{Reduction to an $O(1)$-ruling set problem.} Our first contribution
is an $O(1)$-round reduction of the distributed facility location problem on a
clique to the problem of computing an $O(1)$-ruling set of a specific spanning
subgraph of the clique $C$. Let $C' = (X, E')$ be a spanning subgraph of $C$. A
subset $Y \subseteq X$ is said to be \textit{independent} if no two nodes in $Y$
are neighbors in $C'$. An independent set $Y$ is a
\textit{maximal independent set} (MIS) if no superset $Y' \supset Y$ is
independent in $C'$. An independent set $Y$ is $\beta$-ruling if every node in
$X$ is at most $\beta$ hops along edges in $C'$ from some node in $Y$. Clearly,
an MIS is a $1$-ruling set. We describe an algorithm that approximates
distributed facility location on a clique by first computing a spanning subgraph
$C'$ in $O(1)$ rounds. Then we show that a solution to the \textsc{CliqueFacLoc}
problem (i.e., a set of nodes to open) can be obtained by computing a
$\beta$-ruling set in $C'$ and then selecting a certain subset of the ruling
set. This step -- selecting an appropriate subset of the $\beta$-ruling set --
can also be accomplished in $O(1)$ rounds. The parameter $\beta$ affects the
approximation factor of the computed solution and we show that enforcing
$\beta = O(1)$ ensures that the solution to facility location is an
$O(1)$-approximation.

\begin{figure}
\begin{center}
\begin{tikzpicture}[scale=1,auto,swap]
    \draw (0,0) circle (3cm);
    \draw (0,0) coordinate (i);
    \draw (60:25mm) coordinate (v);
    \draw (135:1cm) coordinate (w);
    \draw (180:15mm) coordinate (x);
    \draw (270:2cm) coordinate (y);
    \draw (300:3cm) coordinate (z);
    \path[draw,thick,-] (i) -- node[below] {$r_i$} (0:3cm);
    \draw (v) -- (60:3cm);
    \draw (w) -- (135:3cm);
    \draw (x) -- (180:3cm);
    \draw (y) -- (270:3cm);
    \node[circle,draw,fill=black!100,inner sep=0pt,minimum width=3pt] at (i) {};
    \node[below] at (i) {$x_i$};
    \node[circle,draw,fill=black!100,inner sep=0pt,minimum width=3pt] at (v) {};
    \node[below left] at (v) {$x_j$};
    \node[circle,draw,fill=black!100,inner sep=0pt,minimum width=3pt] at (w) {};
    \node[right] at (w) {$x_k$};
    \node[circle,draw,fill=black!100,inner sep=0pt,minimum width=3pt] at (x) {};
    \node[below] at (x) {$x_{\ell}$};
    \node[circle,draw,fill=black!100,inner sep=0pt,minimum width=3pt] at (y) {};
    \node[left] at (y) {$x_p$};
    \node[circle,draw,fill=black!100,inner sep=0pt,minimum width=3pt] at (z) {};
    \node[below right] at (z) {$x_q$};
\end{tikzpicture}
\end{center}
\caption{This is an illustration of a radius-$r_i$ ball centered at $x_i$. There
are 6 points (including $x_i$) inside this ball, implying that the sum of 6 
``distances,'' denoted by line segments from points to the ball-boundary, equals
$f_i$.}
\label{fig:ri}
\end{figure}
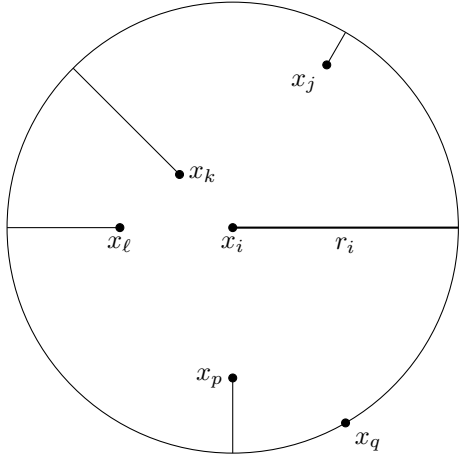

\item\textbf{A new lower bound for metric facility location.} To show that the
computation of an $O(1)$-ruling set, as sketched above, does indeed lead to an
$O(1)$-approximation algorithm for \textsc{CliqueFacLoc}, we develop new
analysis tools. In particular, we derive a new lower bound on the cost of an
optimal solution to the facility location problem. For $x \in X$, let $B(x,r)$
denote the set of points $y \in X$ satisfying $D(x,y) \leq r$. For each $x_i$,
let $r_i$ be the nonnegative real number satisfying
\[\sum\limits_{y \in B(x_i,r_i)} (r_i - D(x_i,y)) = f_i.\]
See Figure \ref{fig:ri} for intuition regarding this definition of the $r_i$'s.
As observed by Mettu and Plaxton \cite{MettuPlaxton03}, $r_i$ exists and is
uniquely defined. B\u{a}doiu et al. proved in \cite{BCIS05} that
$\sum_{i=1}^n r_i$ is a constant-factor approximation for $OPT$ in the case of
\textit{uniform} facility opening costs; this fact plays a critical role in the
design of the constant-round, constant-factor approximation algorithm of
Gehweiler et al. \cite{GLS06} for the special case of \textsc{CliqueFacLoc} in
which all facility opening costs are identical. However, the sum
$\sum_{i=1}^n r_i$ can be arbitrarily large in relation to $OPT$ when the
$f_i$'s are allowed to vary.
\begin{figure}
\begin{center}
\begin{tikzpicture}[scale=1,auto,swap]
    \draw (0,0) coordinate (a);
    \draw (3,0) coordinate (b);
    \path[draw,-] (a) -- node[above] {$1$} (b);
    \node[circle,draw,fill=black!100,inner sep=0pt,minimum width=3pt] at (a) {};
    \node[below] at (a) {$f_1 = 1$};
    \node[circle,draw,fill=black!100,inner sep=0pt,minimum width=3pt] at (b) {};
    \node[below] at (b) {$f_2 = 99$};
\end{tikzpicture}
\end{center}
\caption{Here $r_1 = 1$ and $r_2 = 50$. However, the optimal solution involves
opening only point $x_1$ and costs only $2$ units. The sum $r_1 + r_2$ can be
made arbitrarily large relative to the optimal cost by simply increasing $f_2$.
Note also that $\overline{r}_2$ is just $2$.}
\label{fig:example}
\end{figure}
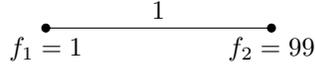
Consider an example consisting of only two nodes, one of whose opening costs is
large in comparison to the other and to the distance between them. (See Figure
\ref{fig:example}.) Though the $r_i$'s turn out not to \textit{directly} provide
a lower bound, they are still quite useful. We apply the following (idempotent)
transformation
\[r_i \rightarrow \overline{r}_i = \min_{1 \leq j \leq n} \{D(x_i,x_j) + r_j\}\]
to define, for each $x_i$, a new quantity that we call $\overline{r}_i$, and use
$\overline{r}_i$ instead of $r_i$ to formulate a lower bound. Note that for any
$i$, $\overline{r}_i \leq r_i$. In the example in Figure \ref{fig:example},
$r_2 = 50$, but $\overline{r}_2 = 2$. We show later that
$\sum_{i=1}^n \overline{r}_i$ bounds the optimal cost $OPT$ from below (to
within a constant factor) in the general case of non-uniform facility opening
costs (Lemma \ref{lemma:lower_bound}). We complete our analysis by showing that
using an $O(1)$-ruling set produces a solution to \textsc{CliqueFacLoc} whose
cost is bounded above by a constant times $\sum_{i=1}^n \overline{r}_i$ (Lemma
\ref{lemma:upper_bound}).

\item\textbf{An $O(1)$-ruling set via a combination of randomized and
deterministic sparsification.} Our final contribution is an
expected-$O(\log \log n)$-round algorithm for computing a $2$-ruling set of a
given spanning subgraph $C'$ of a clique $C$. We start by describing a
deterministic ``subroutine'' that takes a subset $Z \subseteq X$ as input and
computes an MIS of $C'[Z]$ (i.e., the subgraph of $C'$ induced by $Z$) in $c$
rounds if $C'[Z]$ has at most $c \cdot n$ edges. This is achieved via a simple
load-balancing scheme that communicates the entire subgraph $C'[Z]$ to all nodes
in $c$ rounds. We then show how to use randomization to repeatedly peel off
subgraphs with linearly many edges (in expectation) for processing by the
aforementioned subroutine. In this manner, the entire graph $C'$ can be
processed using a number of subroutine calls which is $O(\log \log n)$ in
expectation (Theorem \ref{theorem:2rule_time}).
\end{enumerate}

\subsection{Related Work}
\label{section:related}

In \cite{MoscibrodaWattenhofer05}, Moscibroda and Wattenhofer use the technique
of distributed LP-rounding to solve the facility location problem in the
$\mathcal{CONGEST}$ model, assuming that the communication network is the
complete bipartite graph $G = (\mathcal{F}, \mathcal{C}, E)$. Let
$m = |\mathcal{F}|$ and $n = |\mathcal{C}|$. Assuming that the connection costs
and facility opening costs have size that is polynomial in $(m + n)$, they
achieve, for every constant $k$, an
$O(\sqrt{k} (m n)^{1 / \sqrt{k}} \log (m + n))$-approximation in $O(k)$
communication rounds. Note that one can obtain ``super-fast'' algorithms for
facility location by taking $k$ small enough, but with a significant
corresponding loss in the approximation factor. For the metric facility location
problem, Pemmaraju and Pandit use the primal-dual method to obtain a
$7$-approximation \cite{PanditPemmaraju09} that runs in $O(\log m + \log n)$
rounds. The same paper contains a generalization of this result: A $k$-round
algorithm that, for every constant $k$, yields an approximation factor of
$O(m^{2 / \sqrt{k}} \cdot n^{3 / \sqrt{k}})$. Note that setting
$k = \log^2 (m n)$ here yields an $O(1)$-approximation in $O(\log^2 (m n))$
rounds for the metric facility location problem. Subsequently
\cite{PanditPemmaraju10}, Pemmaraju and Pandit improved the
Moscibroda-Wattenhofer result for non-metric facility location. For instances
with $m$ facilities and $n$ clients, for any positive $k$, the algorithm in
\cite{PanditPemmaraju10} runs in $O(k)$ rounds and yields a
$O((m n)^{5 / \sqrt{k}} \cdot \log n)$-approximation, shaving off a
``logarithmic'' term from the approximation factor achieved by Moscibroda and
Wattenhofer.

While all of the above mentioned distributed algorithms are fast and achieve
near-optimal approximation ratios for facility location, none of them seem to
take particular advantage of the small diameter of the network on which they are
executing. An excellent illustration of a distributed algorithm that takes
advantage of the low-diameter setting in which it operates is provided by the
minimum spanning tree (MST) algorithm of Lotker et al.~\cite{LPPP05}. Consider a
clique network in which each edge $(u,v)$ has an associated weight $w(u,v)$ of
which only nodes $u$ and $v$ are aware. The problem is for the nodes to compute
an MST of the edge-weighted clique such that after the computation, each node
knows the MST edges incident on it. It is important to note that the problem is
defined by $\Theta(n^2)$ inputs and it would take
$\Omega\left(\frac{n}{\log n}\right)$ rounds of communication for all of this
information to reach a single node (in the $\mathcal{CONGEST}$ model of
distributed computation). Lotker et al.~\cite{LPPP05} showed that the MST
problem on a clique can in fact be solved in $O(\log \log n)$ rounds in the
$\mathcal{CONGEST}$ model. The algorithm of Lotker et al.~employs a clever
merging procedure that causes the sizes of the MST components to, roughly
speaking, square with each iteration, leading to an $O(\log \log n)$-round
computation time. 

Several more recent papers have continued the development of ``super-fast''
algorithms in low-diameter settings. In STOC 2011, Lenzen and Wattenhofer
\cite{LenzenWattenhofer11} derived tight bounds on parallel load balancing and
their result has applications in how information can be quickly disseminated in
a clique (in the $\mathcal{CONGEST}$ model). In PODC 2011, Patt-Shamir and
Teplitsky \cite{PattShamirTeplitsky11} presented on $O(\log \log n)$ randomized
algorithm for the \textit{distributed sorting} problem. Third, Lenzen
\cite{Lenzen12} showed that randomization is not necessary for solving problems
such as distributed sorting efficiently. Lenzen presented
\textit{deterministic}, constant-round algorithms for a routing problem and for
the distributed sorting problem considered in \cite{PattShamirTeplitsky11}.
Constant-round algorithms for sophisticated problems, of the kind described by
Lenzen \cite{Lenzen12}, highlight the difficulty of showing non-trivial lower
bounds in the $\mathcal{CONGEST}$ model for clique networks. For example, it has been proved that
computing an MST in general requires $\Omega(\sqrt[4]{n} / \log n)$ rounds for
diameter-$3$ graphs \cite{LPP06}, but no non-trivial lower bounds are known for
diameter-$2$ or clique (diameter-$1$) networks.

\section{Reduction to the $O(1)$-Ruling Set Problem}
\label{sect:FacLoc}

\subsection{Algorithm}

We present our distributed algorithm for \textsc{CliqueFacLoc} in Algorithm
\ref{alg:FacLocAlg}. This algorithm is not complete in the sense that it does
not solve \textsc{CliqueFacLoc} directly, but rather reduces it to a problem of
computing an $s$-ruling set on a spanning subgraph of the clique network. We
complete the algorithm in the next section by presenting a $2$-ruling set
algorithm that runs in expected-$O(\log \log n)$ rounds.

\begin{algo}
\textbf{Input:} A discrete metric space of nodes $(X, D)$, with opening costs;\\
a sparsity parameter $s$\\
\textbf{Assumption:} Each node knows its own opening cost and the distances from
itself to other nodes\\
\textbf{Output:} A subset of nodes (a \textit{configuration}) to be declared
open
\begin{tabbing}
......\=a....\=b....\=c....\=d....\=e....\=f....\=g....\=h......\kill
1.\>Each node $x_i$ computes and broadcasts its value $r_i$;
$r_0 := \min_i r_i$.\\
2.\>Each node computes a partition of the network into classes ${V_k}$,
$k=0, 1, \ldots$ with\\
\>\>$c_0^k \cdot r_0 \leq r_j < c_0^{k+1} \cdot r_0$ for $x_j \in V_k$.\\
3.\>Each node $x_i \in V_k$ determines its neighbors within its own class $V_k$
using the following rule:\\
\>\>For $x_j \in V_k$, $x_j$ is a neighbor of $x_i$ if and only if
$D(x_i,x_j) \leq r_i + r_j$.\\
\>\>The graph on vertex set $V_k$ induced by these edges is denoted $H_k$.\\
4.\>All nodes now use procedure \textsc{RulingSet}($\bigcup_k H_k$,$s$) to
determine\\
\>\>an $s$-ruling set $T^* \subseteq X$. We use $T_k$ to denote
$T^* \cap V_k$.\\
5.\>Each node $x_i$ broadcasts its membership status with respect to the
$s$-ruling set\\
\>\>of its class, $T_k$.\\
6.\>A node $x_i \in V_k$ declares itself to be open if:\\
\>\>(i) $x_i$ is a member of set $T_k \subseteq V_k$, and\\
\>\>(ii) There is no node $x_j$ belonging to a class $V_{k'}$, with $k' < k$,\\
\>\>\>such that $D(x_i,x_j) \leq 2 r_i$.\\
7.\>Each node broadcasts its status (open or not), and nodes connect to the
nearest open facility.
\end{tabbing}
\caption{\textsc{FacilityLocation}}
\label{alg:FacLocAlg}
\end{algo}

\noindent
Algorithm \ref{alg:FacLocAlg} consists of three stages, which we now describe.

\noindent\textit{Stage 1 (Steps 1-2).} Each node knows its own opening cost and
the distances to other nodes, so node $x_i$ computes $r_i$ and broadcasts that
value to all others. Once this is complete, each node knows all of the $r_i$
values. Next, every node computes a partition of the network into groups whose
$r_i$ values vary by at most a factor of $c_0 = 1 + \frac{1}{\sqrt{2}}$
(Step 2). Specifically, let $r_0 := \min_{1 \leq j \leq n} \{r_j\}$, and define
the class $V_k$ to be the set of nodes $x_i$ such that
$c_0^k \cdot r_0 \leq r_i < c_0^{k+1} \cdot r_0$. Every node computes the class
into which each node in the network, including itself, falls.
\vspace{2mm}

\noindent\textit{Stage 2 (Steps 3-5).} We now focus our attention on class
$V_k$. Suppose $x_i, x_j \in V_k$. We define $x_i$ and $x_j$ to be
\textit{adjacent} in class $V_k$ if $D(x_i,x_j) \leq r_i + r_j$. Each node in
$V_k$ can determine its neighbors in $V_k$. We refer to the graph on nodes in
$V_k$ induced by this adjacency condition as $H_k$. Next, consider the spanning
subgraph (on all $n$ nodes) $\bigcup_k H_k$. We apply procedure
\textsc{RulingSet}() to $\bigcup_k H_k$ to compute an $s$-ruling set $T^*$ of
$\bigcup_k H_k$. We describe a ``super-fast'' (in expectation) implementation of
\textsc{RulingSet}($\cdot$,$2$) in Section \ref{sect:2Ruling}. An $s$-ruling set
$T^* \subseteq X$ of $\bigcup_k H_k$ determines, for each $k$, an $s$-ruling set
$T_k \subseteq V_k$ of $H_k$. After the sparse sets $T_k$ have been constructed
for the classes $V_k$, each node broadcasts its membership status with respect
to the $s$-ruling set $T_k$ of its own class.
\vspace{2mm}

\noindent\textit{Stage 3 (Steps 6-7).} Finally, a node $x_i$ in class $V_k$
opens if (i) $x_i \in T_k$, and (ii) there is no node $x_j \in B(x_i, 2 r_i)$ of
a class $V_{k'}$ with $k' < k$. Open facilities declare themselves via
broadcast, and every node connects to the nearest open facility.

\subsection{Running Time Analysis}

The accounting of the number of communication rounds required by Algorithm
\ref{alg:FacLocAlg} is straightforward. Stage 1 requires exactly one round of
communication, to broadcast $r_i$ values. Stage 2 requires $O(\mathcal{T}(n,s))$
rounds to compute the $s$-ruling subsets $\{T_k\}_k$, and an additional round to
broadcast membership status. Stage 3 requires one round, in order to inform
others of a nodes decision to open or not. Thus, the running time of our
algorithm in communication rounds is $O(\mathcal{T}(n,s))$. In Section
\ref{sect:2Ruling} we show that $\mathcal{T}(n,2)$ can be $O(\log \log n)$ in
expectation.

\begin{lemma}
Algorithm \ref{alg:FacLocAlg} runs in $O(\mathcal{T}(n,s))$ rounds, where
$\mathcal{T}(n,s)$ is the number of communication rounds needed to compute an
$s$-ruling set of a spanning subgraph $C'$ of the $n$-node clique network.
\label{lemma:facloc_running_time}
\end{lemma}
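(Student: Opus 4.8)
The plan is to account for the communication rounds stage by stage, showing that every step except the ruling-set computation costs only $O(1)$ rounds, so that the total is dominated by the $\mathcal{T}(n,s)$ term. The key observation that makes each non-ruling-set step cheap is that the only communication performed is the broadcast of a single $O(\log n)$-bit value along each incident edge, which takes $O(1)$ rounds in the $\mathcal{CONGEST}$ model under the stated assumption that each number in the input fits in $O(\log n)$ bits.

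First I would handle Stage 1 (Steps 1--2). Each node $x_i$ locally computes $r_i$ from the distances and opening costs it already knows (no communication), then broadcasts $r_i$; a single broadcast of one $O(\log n)$-bit value is one round. After this round every node holds all $n$ of the $r_j$ values, so the computation of $r_0 = \min_j r_j$ and of the partition $\{V_k\}$ in Step 2 is purely local and requires no further communication. Thus Stage 1 costs $O(1)$ rounds.

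Next I would argue that Stage 2 (Steps 3--5) costs $O(\mathcal{T}(n,s)) + O(1)$ rounds. In Step 3 each node $x_i \in V_k$ decides adjacency to $x_j \in V_k$ by testing $D(x_i,x_j) \leq r_i + r_j$; since each node already knows all distances and all $r_j$ values, this construction of $\bigcup_k H_k$ is local and free of communication. Step 4 invokes \textsc{RulingSet}$(\bigcup_k H_k, s)$, which by definition costs $\mathcal{T}(n,s)$ rounds on the spanning subgraph $C' = \bigcup_k H_k$. Step 5 is one broadcast of each node's membership bit in its class's ruling set $T_k$, hence $O(1)$ rounds. Finally, for Stage 3 (Steps 6--7), the open/closed decision of $x_i$ in Step 6 depends only on whether $x_i \in T_k$ (known from Step 5) and on the classes and distances of nearby nodes, all of which are already known locally; Step 7 is a single broadcast of each node's open/not-open status, after which each node selects the nearest open facility locally. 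So Stage 3 costs $O(1)$ rounds.

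Summing, the round complexity is $O(1) + O(\mathcal{T}(n,s)) + O(1) = O(\mathcal{T}(n,s))$, which is the claim. The only genuinely subtle point -- and the one I would treat most carefully -- is confirming that no step secretly requires transmitting more than $O(1)$ values per edge per round; in particular one must verify that the entire subgraph $\bigcup_k H_k$ is determined by information already resident at each node after the one broadcast of Stage 1, so that $C'$ is assembled with no additional communication and the only cost beyond the broadcasts is wholly absorbed into the $\mathcal{T}(n,s)$ term.
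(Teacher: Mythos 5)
Your proposal is correct and follows essentially the same stage-by-stage round accounting as the paper: Stage 1 costs one broadcast round, Stage 2 costs $\mathcal{T}(n,s)$ plus one round for membership broadcasts, and Stage 3 costs one round, giving $O(\mathcal{T}(n,s))$ total. Your added verification that the subgraph $\bigcup_k H_k$ is locally computable from the broadcast $r_j$ values (so Step 3 needs no communication) is a point the paper leaves implicit, but it is the same argument.
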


\subsection{Cost Approximation Analysis}

We now show that Algorithm \ref{alg:FacLocAlg} produces an $O(s)$-approximation
to \textsc{CliqueFacLoc}. This analysis borrows ideas from the analysis of a
simple, greedy, sequential facility location algorithm due to Mettu and Plaxton
\cite{MettuPlaxton03}. The Mettu-Plaxton algorithm considers points $x_i$ in
non-decreasing order of the $r_i$'s. Then, each $x_i$ under consideration is
included in the solution if $B(x_i, 2 r_i)$ does not contain any point already
included in the solution. Mettu and Plaxton show that, if $F_{MP} \subseteq X$
is the set of facilities opened by their algorithm,
$FacLoc(F_{MP}) \leq 3 \cdot OPT$.

We next recall the charging scheme employed by Mettu and Plaxton for the
analysis of their algorithm. The $charge(\cdot,\cdot)$ of a node $x_i$ with
respect to a collection of (open) facilities $F$ (also known as a
\textit{configuration}) is defined by
\[charge(x_i,F) = D(x_i,F) + \sum\limits_{x_j \in F} \max \{0, r_j - D(x_j,x_i)\}\]
where $D(x_i,F) = \min_{x_j \in F} D(x_i,x_j)$. It is easy to check that the
cost of a configuration $F$, $FacLoc(F)$, is precisely equal to the sum of the
charges with respect to $F$, i.e., $\sum_{i=1}^n charge(x_i,F)$
\cite{MettuPlaxton03}. Given that the Mettu-Plaxton algorithm yields a
$3$-approximation, we see that for any $F \subseteq X$,
\[FacLoc(F) \geq \frac{1}{3} FacLoc(F_{MP}) = \frac{1}{3} \sum_{i=1}^n charge(x_i,F_{MP})\]

The rest of our analysis consists of two parts. In the first part, we show (as
promised) that $\sum_{i=1}^n \overline{r}_i$ is a constant-factor lower bound
for $OPT$. In the second part, we show the corresponding upper bound result. In
other words, we show that for the subset $F^*$ of facilities opened by Algorithm
\ref{alg:FacLocAlg}, $FacLoc(F^*) = O(\sum_{i=1}^n \overline{r}_i)$.

\subsubsection{A New Lower Bound for Non-uniform Metric Facility Location.}

\begin{lemma}
$FacLoc(F) \geq (\sum_{i=1}^n \overline{r}_i) / 6$ for any configuration $F$.
\label{lemma:lower_bound}
\end{lemma}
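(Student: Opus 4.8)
The plan is to bypass the $r_i$'s entirely and work with the charge identity $FacLoc(F) = \sum_{i=1}^n charge(x_i,F)$, which (as noted above) holds for \emph{every} configuration $F$. This reduces the global bound to a clean per-node inequality: I would prove that for every node $x_i$ and every nonempty $F$, $charge(x_i,F) \geq \overline{r}_i/2$. Summing this over all $i$ immediately yields $FacLoc(F) \geq \frac{1}{2}\sum_i \overline{r}_i$, which is in fact stronger than the claimed $\frac{1}{6}\sum_i \overline{r}_i$; alternatively, one obtains exactly the stated constant by applying the per-node bound to $F_{MP}$ and feeding it into the Mettu--Plaxton inequality $FacLoc(F) \geq \frac{1}{3}\sum_i charge(x_i,F_{MP})$ recalled above (the factors $3$ and $2$ producing the $6$). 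The empty-configuration case is vacuous since $FacLoc(\emptyset) = \infty$.

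For the per-node inequality, I would let $x_{j^*} \in F$ be the open facility nearest to $x_i$ and write $d = D(x_i,x_{j^*}) = D(x_i,F)$. Discarding all terms of the penalty sum except the one indexed by $j^*$ gives $charge(x_i,F) \geq d + \max\{0,\, r_{j^*} - d\} = \max\{d,\, r_{j^*}\}$. The key step is then to relate this quantity to $\overline{r}_i$: instantiating the defining minimum $\overline{r}_i = \min_{1\leq j\leq n}\{D(x_i,x_j)+r_j\}$ at the index $j^*$ gives $\overline{r}_i \leq d + r_{j^*}$, and combining with the elementary fact $\max\{d,\, r_{j^*}\} \geq \frac{1}{2}(d+r_{j^*})$ yields $charge(x_i,F) \geq \frac{1}{2}\overline{r}_i$, as desired.

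I expect the main conceptual obstacle to be recognizing why the transformed quantity $\overline{r}_i$ --- rather than $r_i$ --- is the correct object, and why its min-over-\emph{all}-points definition is exactly what the charge wants: the charge of $x_i$ splits into a connection part ($d$) and an opening-penalty part (governed by $r_{j^*}$), and $\overline{r}_i \leq d + r_{j^*}$ is precisely the bound that ties these two ingredients together at the open facility nearest to $x_i$. The example of Figure~\ref{fig:example} shows that the analogous bound with $r_i$ in place of $\overline{r}_i$ is false, so this substitution is essential rather than cosmetic. A secondary, more routine point is the verification of the charge identity itself: one swaps the order of summation in $\sum_i \sum_{x_j\in F}\max\{0,\, r_j - D(x_j,x_i)\}$ and uses the defining equation $\sum_{y\in B(x_j,r_j)}(r_j - D(x_j,y)) = f_j$ (observing that points outside $B(x_j,r_j)$ contribute $0$) to recover $\sum_{x_j\in F} f_j$, while $\sum_i D(x_i,F)$ is simply the total connection cost.
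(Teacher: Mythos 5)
Your proposal is correct, and it takes a genuinely cleaner route than the paper's. The core per-node step is the same in both: the paper, for $x_i \notin F_{MP}$, lower-bounds $charge(x_i,F_{MP})$ by $\max\{r_{\delta(i)}, D(x_i,x_{\delta(i)})\}$ and then uses $\overline{r}_i \leq D(x_i,x_{\delta(i)}) + r_{\delta(i)} \leq 2\max\{r_{\delta(i)}, D(x_i,x_{\delta(i)})\}$ --- exactly your $\max\{d, r_{j^*}\} \geq \frac{1}{2}(d+r_{j^*}) \geq \frac{1}{2}\overline{r}_i$ step. But the paper applies this only to the specific configuration $F_{MP}$ produced by the Mettu--Plaxton algorithm, obtaining $FacLoc(F_{MP}) \geq \frac{1}{2}\sum_i \overline{r}_i$, and then transfers the bound to an arbitrary configuration $F$ via $FacLoc(F) \geq OPT \geq \frac{1}{3}FacLoc(F_{MP})$, which is where the factor $6$ comes from. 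You correctly observe that nothing in the per-node computation uses any special property of $F_{MP}$ --- indeed, the paper's proof opens by citing the separation property of $F_{MP}$ (no two open facilities within $r_i + r_j$) but never actually invokes it in the calculation --- so $charge(x_i,F) \geq \overline{r}_i/2$ holds verbatim for every nonempty $F$ and every node, including nodes of $F$ itself (there $d = 0$ and the retained term gives $charge(x_i,F) \geq r_i \geq \overline{r}_i$). Summing yields $FacLoc(F) \geq \frac{1}{2}\sum_i \overline{r}_i$, strictly stronger than the stated bound. What your route buys: it eliminates the dependence on the Mettu--Plaxton $3$-approximation theorem entirely (only the charge identity and the defining equation of the $r_j$'s are needed), and it improves the constant from $1/6$ to $1/2$, which would propagate through Lemma \ref{lemma:upper_bound} to improve the final approximation guarantee by a factor of $3$. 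Your verification sketch of the charge identity --- swapping the order of summation and using $\sum_{y \in B(x_j,r_j)}(r_j - D(x_j,y)) = f_j$, with points outside $B(x_j,r_j)$ contributing zero to the inner sum --- is also correct, and your handling of the degenerate case (restricting to nonempty $F$, as the paper implicitly does) is fine.
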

\begin{proof}
Notice that $F_{MP}$ has the property that no two facilities
$x_i, x_j \in F_{MP}$ can be so close that $D(x_i,x_j) \leq r_i + r_j$
\cite{MettuPlaxton03}. Therefore, if $x_{\delta(i)}$ denotes a closest open
facility (i.e., an open facility satisfying
$D(x_i,x_{\delta(i)}) = D(x_i,F_{MP})$), then
\begin{align*}
FacLoc(F_{MP}) &= \sum\limits_{i=1}^n charge(x_i,F_{MP})\\[1mm]
&= \sum\limits_{x_j \in F_{MP}} charge(x_j,F_{MP})
+ \sum\limits_{x_i \notin F_{MP}} charge(x_i,F_{MP})\\[1mm]
&\geq \sum\limits_{x_j \in F_{MP}} r_j + %
\sum\limits_{x_i \notin F_{MP}} \left[D(x_i,x_{\delta(i)}) + %
\max \{0, r_{\delta(i)} - D(x_{\delta(i)},x_i)\}\right]\\[1mm]
&= \sum\limits_{x_j \in F_{MP}} r_j
+ \sum\limits_{x_i \notin F_{MP}} \max \{r_{\delta(i)}, D(x_i,x_{\delta(i)})\}\\
\end{align*}
Note that the inequality in the above calculation (in the third line) follows
from observing that $charge(x_j,F_{MP}) \geq r_j$ for $x_j \in F_{MP}$, and from
throwing away some terms of the sum in the definition of $charge(x_i,F_{MP})$
for $x_i \notin F_{MP}$.

Now, recall the definition
$\overline{r}_i = \min_{1 \leq j \leq n} \{D(x_i,x_j) + r_j\}$. Therefore,
$\overline{r}_i \leq r_i$, and $\overline{r}_i \leq D(x_i,x_{\delta(i)}) +
r_{\delta(i)} \leq 2 \cdot \max \{r_{\delta(i)}, D(x_i,x_{\delta(i)})\}$. It
follows that
\begin{align*}
FacLoc(F_{MP}) &\geq \sum\limits_{x_j \in F_{MP}} \overline{r}_j
+ \sum\limits_{x_i \notin F_{MP}} \frac{\overline{r}_i}{2}\\[1mm]
&\geq \sum\limits_{x_j \in F_{MP}} \frac{\overline{r}_j}{2}
+ \sum\limits_{x_i \notin F_{MP}} \frac{\overline{r}_i}{2}\\[1mm]
&= \frac{1}{2} \cdot \sum\limits_{i=1}^n \overline{r}_i\\
\end{align*}
Therefore
$FacLoc(F) \geq FacLoc(F_{MP}) / 3 \geq (\sum_{i=1}^n \overline{r}_i) / 6$, for
any configuration $F$.
\end{proof}

\subsubsection{The Upper Bound Analysis}

Let $F^*$ be the set of nodes opened by our algorithm. We analyze $FacLoc(F^*)$
by bounding $charge(x_i,F^*)$ for each $x_i$. Recall that
$FacLoc(F) = \sum_{i=1}^n charge(x_i,F)$ for any $F$. Since $charge(x_i,F^*)$ is
the sum of two terms, $D(x_i,F^*)$ and
$\sum_{x_j \in F^*} \max \{0, r_j - D(x_j,x_i)\}$, bounding each term separately by a
$O(s)$-multiple of $\overline{r}_i$, yields the result.

In the following analysis, we mainly use the property of an
$s$-ruling set $T_k \subseteq V_k$ that for any node
$x_i \in V_k$, $D(x_i,T_k) \leq 2 c_0 r_i \cdot s$.
Note that here we are using distances from the metric $D$ of $(X, D)$.
We also make critical use of the property of our algorithm that if a node $x_j \in T_k$ does not open, then there exists another node $x_{j'}$ in a
class $V_{k'}$, with $k' < k$, such that $D(x_j,x_{j'}) \leq 2 r_j$.


\begin{lemma}
$D(x_i,F^*) \leq (s + 1) \cdot 4 c_0^2 \cdot \overline{r}_i$.
\label{lemma:open_fac_dist}
\end{lemma}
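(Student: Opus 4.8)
The plan is to bound the distance from an arbitrary node $x_i$ to the nearest open facility $F^*$, where the bound must be in terms of $\overline{r}_i$. I would start from the node $x_i \in V_k$ and trace a chain of nodes that eventually lands on an \emph{open} facility, accumulating distance bounds along the way. The key tools are the two properties emphasized just before the statement: first, the $s$-ruling set property that for any $x_i \in V_k$, there is a node in $T_k$ within distance $2 c_0 r_i \cdot s$; and second, the ``push-down'' property that if a node in $T_k$ fails to open, there is a witness node in a strictly lower class $V_{k'}$ (with $k' < k$) within distance $2 r_j$ of it.

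First I would use the ruling-set property to find a node $x_t \in T_k$ with $D(x_i, x_t) \leq 2 c_0 r_i \cdot s$. If $x_t$ opens, we are essentially done: $x_t \in F^*$ and the distance bound follows, after converting the $r_i$ on the right-hand side to $\overline{r}_i$. The subtlety is that $x_t$ may not open because of condition (ii) in Step 6. In that case the push-down property guarantees a node $x_{t'}$ in a strictly lower class with $D(x_t, x_{t'}) \leq 2 r_t$. Since $x_{t'}$ lies in a lower class, its $r$-value is smaller by a controlled factor (the classes are geometric in $c_0$), so the added distance is bounded. Crucially, because class indices strictly decrease, this descent cannot continue indefinitely; I would argue that the witness chain terminates at an \emph{open} facility, and I would bound the total distance traveled as a geometric series. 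Since each hop shrinks the relevant $r$-value by a factor of $c_0 > 1$, the geometric series over the descending classes converges, and summing all the per-hop contributions yields a bound that is a constant multiple of $r_i$ (the factor $4 c_0^2$ absorbing the geometric sum together with the ruling-set and push-down constants), scaled by $(s+1)$ to account for the ruling-set reach.

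The final step is to replace $r_i$ with $\overline{r}_i$ on the right-hand side. This does not follow from $\overline{r}_i \leq r_i$, which points the wrong way; instead I expect to need the reverse-direction relationship between $r_i$ and $\overline{r}_i$ that holds \emph{within a class}. Here I would invoke the class structure: nodes in the same class have comparable $r$-values, and I would show that the relevant $\overline{r}_i$ is within a constant factor (tied to $c_0$) of the $r$-value at the endpoint of the chain, so the bound can be re-expressed in terms of $\overline{r}_i$ at the cost of one more factor of $c_0$.

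The main obstacle I anticipate is the descent argument: making precise that the chain of failed-to-open witnesses terminates at an open node and, simultaneously, that the accumulated distance stays geometrically bounded. The strict decrease of class indices guarantees termination, but I need to verify that at every step the distance increment is controlled by $c_0^{k'} r_0 \leq r_{t'} < c_0^{k'+1} r_0$ so that the telescoping sum over classes is dominated by the top term. Pinning down exactly which constant multiplies $\overline{r}_i$—and confirming it matches $(s+1) \cdot 4 c_0^2$—will require carefully bookkeeping the factor of $2 c_0 s$ from the ruling-set step, the factor of $2$ from each push-down hop, and the geometric sum $\sum_{j \geq 0} c_0^{-j} = 1/(1 - c_0^{-1})$, which for $c_0 = 1 + 1/\sqrt{2}$ is where I expect the constants to tighten up.
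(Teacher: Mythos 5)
Your descent argument is sound and matches the paper's chain analysis (ruling-set hop of $2 c_0 s \cdot r$, push-down hop of $2r$, strict class decrease forcing termination at an open node, geometric sum dominated by the top term with $2c_0(2+\sqrt{2}) = 4c_0^2$). But there is a genuine gap at the final step, and you half-recognized it yourself: your chain starts at $x_i$, so the accumulated bound is $(s+1)\cdot 4c_0^2\cdot r_i$, and the conversion to $\overline{r}_i$ that you propose does not exist. There is no ``reverse-direction relationship'' between $r_i$ and $\overline{r}_i$, within a class or otherwise: $\overline{r}_i = \min_j \{D(x_i,x_j)+r_j\}$ is determined by nodes that may lie in arbitrarily lower classes, and the ratio $r_i/\overline{r}_i$ is unbounded. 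The paper's own two-point example (Figure \ref{fig:example}) exhibits this: $r_2 = 50$ while $\overline{r}_2 = 2$, and increasing $f_2$ makes $r_2/\overline{r}_2$ arbitrarily large. Concretely, the very first hop of your chain carries the bound $2c_0 s\cdot r_i$, which already cannot be dominated by any constant multiple of $\overline{r}_i$. Note also that comparability of $r$-values within a class is irrelevant here, since the minimizer defining $\overline{r}_i$ need not share $x_i$'s class, and the endpoint of your descent chain sits in a \emph{lower} class whose $r$-value does not control the sum — the telescoping distance is dominated by its top term $r_i$, not by the endpoint.

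The missing idea is to anchor the chain not at $x_i$ but at the minimizer: let $x_{i'}$ satisfy $\overline{r}_i = D(x_i,x_{i'}) + r_{i'}$, with $x_{i'} \in V_{k'}$. Run your entire descent argument starting from $x_{i'}$ (ruling-set hop within $V_{k'}$, then push-downs), which yields $D(x_{i'},F^*) \leq (s+1)\cdot 4c_0^2\cdot r_{i'}$. Then a single triangle inequality finishes it:
\begin{align*}
D(x_i,F^*) &\leq D(x_i,x_{i'}) + D(x_{i'},F^*) \leq D(x_i,x_{i'}) + (s+1)\cdot 4c_0^2\cdot r_{i'}\\
&\leq (s+1)\cdot 4c_0^2\cdot \left(D(x_i,x_{i'}) + r_{i'}\right) = (s+1)\cdot 4c_0^2\cdot \overline{r}_i,
\end{align*}
where the last inequality simply uses $(s+1)\cdot 4c_0^2 \geq 1$ to absorb the additive $D(x_i,x_{i'})$ term. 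This is exactly how the paper's proof obtains the bound in terms of $\overline{r}_i$ without ever needing to compare $r_i$ to $\overline{r}_i$; your bookkeeping of the constants is otherwise correct and would go through verbatim with $x_{i'}$ in place of $x_i$.
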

\begin{proof}
Let $x_{i'}$ be a minimizer for $D(x_i,x_y) + r_y$ (where $x_{i'}$ may be $x_i$
itself), so that $\overline{r}_i = D(x_i,x_{i'}) + r_{i'}$. Suppose that
$x_{i'} \in V_{k'}$. Note that $k' \le k$.
We know that $x_{i'}$ is within distance
$2 c_0 s \cdot r_{i'}$ of a node $x_{j'} \in T_{k'}$ (which may be $x_{i'}$
itself). Then, $x_{j'}$ either opens, or there exists a node $x_{j_1}$ of a
lower class such that $D(x_{j'},x_{j_1}) \leq 2 r_{j'}$. In the former case,
$D(x_{i'},F^*) \leq 2 c_0 s \cdot r_{i'}$; in the latter case we have
$D(x_{i'},x_{j_1}) \leq D(x_{i'},x_{j'}) + D(x_{j'},x_{j_1}) \leq
2 c_0 s \cdot r_{i'} + 2 r_{j'} \leq (s + 1) \cdot 2 c_0 r_{i'}$, the last
inequality owing to the fact that $x_{i'}$ and $x_{j'}$ belong to the same
class.

So, within a distance $(s + 1) \cdot 2 c_0 r_{i'}$ of $x_{i'}$, there exists
either an open node or a node of a lower class. In the latter case (in which
there is a node $x_{j_1}$ of a lower class), we repeat the preceding analysis
for $x_{j_1}$; within a distance $(s + 1) \cdot 2 c_0 r_{j_1}$ of $x_{j_1}$,
there must exist either an open node or a node of a class $V_{k_2}$, where
$k_2 \leq k' - 2$.

Repeating this analysis up to $k' + 1$ times shows that, within a distance of at
most $(s + 1) \cdot 2 c_0 \cdot (r_{i'} + r_{j_1} + r_{j_2} + r_{j_3} + \ldots +
r_{j_{k'}})$, where $r_{j_w}$ is the characteristic radius of a node $x_{j_w}$
in class $V_{k'-w}$, there exists a node which opens as a facility. This
distance is naturally bounded above by $(s + 1) \cdot 2 c_0 \cdot
(r_{i'} + r_{i'} + \frac{1}{c_0} r_{i'} + \frac{1}{c_0^2} r_{i'} + \ldots)
\leq (s + 1) \cdot 2 c_0 \cdot (2 + \sqrt{2}) r_{i'} =
(s + 1) \cdot 4 c_0^2 \cdot r_{i'}$.
Therefore,
\begin{align*}
D(x_i,F^*) &\leq D(x_i,x_{i'}) + D(x_{i'},F^*)\\[1mm]
&\leq D(x_i,x_{i'}) + (s + 1) \cdot 4 c_0^2 \cdot r_{i'}\\[1mm]
&\leq (s + 1) \cdot 4 c_0^2 \cdot \left(D(x_i,x_{i'}) + r_{i'}\right)\\[1mm]
&= (s + 1) \cdot 4 c_0^2 \cdot \overline{r}_i\\
\end{align*}
\end{proof}

\begin{lemma}
$\sum_{x_j \in F^*} \max \{0, r_j - D(x_j,x_i)\} \leq c_0 \cdot \overline{r}_i$.
\label{lemma:open_fac_contrib}
\end{lemma}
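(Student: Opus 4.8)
The plan is to exploit the fact that the summand $\max\{0, r_j - D(x_j,x_i)\}$ is strictly positive only for those open facilities $x_j \in F^*$ whose ball $B(x_j,r_j)$ contains $x_i$ (equivalently $D(x_j,x_i) < r_j$). I would first argue that there is \emph{at most one} such facility, so that the entire sum collapses to a single term, and then bound that term by $c_0 \cdot \overline{r}_i$.

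First I would establish a pairwise separation property for the opened facilities: for any two distinct $x_a, x_b \in F^*$, one has $D(x_a,x_b) > r_a + r_b$. If $x_a$ and $x_b$ lie in the same class $V_k$, then both belong to the independent set $T_k$, so they are non-adjacent in $H_k$, which is exactly the statement $D(x_a,x_b) > r_a + r_b$. If they lie in different classes, say $x_a \in V_{k_a}$ and $x_b \in V_{k_b}$ with $k_a < k_b$, then the class bounds give $r_a < c_0^{k_a+1} r_0 \le c_0^{k_b} r_0 \le r_b$; since $x_b$ opened, the opening condition guarantees that no node of a strictly lower class lies within distance $2 r_b$ of $x_b$, so $D(x_a,x_b) > 2 r_b > r_a + r_b$. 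With this separation in hand, if two distinct facilities $x_a, x_b \in F^*$ both satisfied $D(x_a,x_i) < r_a$ and $D(x_b,x_i) < r_b$, the triangle inequality would give $D(x_a,x_b) < r_a + r_b$, contradicting the separation. Hence at most one open facility $x_{j^*}$ contributes, and the sum equals $r_{j^*} - D(x_{j^*},x_i) \le r_{j^*}$ (or is $0$, in which case we are done).

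It then remains to show $r_{j^*} \le c_0 \cdot \overline{r}_i$. Let $x_{i'}$ be a minimizer in the definition of $\overline{r}_i$, so that $\overline{r}_i = D(x_i,x_{i'}) + r_{i'}$, and say $x_{j^*} \in V_{k^*}$. If $\overline{r}_i \ge r_{j^*}$ the bound is immediate. Otherwise $\overline{r}_i < r_{j^*}$, and the triangle inequality yields $D(x_{j^*},x_{i'}) \le D(x_{j^*},x_i) + D(x_i,x_{i'}) < r_{j^*} + \overline{r}_i < 2 r_{j^*}$, using $D(x_i,x_{i'}) = \overline{r}_i - r_{i'} \le \overline{r}_i$ and $D(x_{j^*},x_i) < r_{j^*}$. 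Because $x_{j^*}$ opened, no node of a class strictly below $k^*$ can lie within $2 r_{j^*}$ of it; hence $x_{i'}$ must belong to a class $\ge k^*$, giving $r_{i'} \ge c_0^{k^*} r_0$. Combining with the class bound $r_{j^*} < c_0^{k^*+1} r_0 = c_0 \cdot c_0^{k^*} r_0 \le c_0 r_{i'} \le c_0 \overline{r}_i$ finishes the estimate, and therefore the sum is at most $r_{j^*} \le c_0 \overline{r}_i$.

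The main obstacle is the cross-class coordination: the positive contributions could a priori come from facilities of many different classes and with radii much larger than $\overline{r}_i$, and it is precisely the opening condition (condition (ii) of Step~6) together with the geometric spacing $r_j \in [c_0^k r_0, c_0^{k+1} r_0)$ of the classes that simultaneously forces uniqueness of the contributing facility and caps its radius at $c_0 \overline{r}_i$. Some care is needed to confirm that both uses of the opening condition remain valid when $x_{i'}$ or $x_{j^*}$ coincides with $x_i$ itself, but those degenerate situations fall under the $\overline{r}_i \ge r_{j^*}$ branch and cause no difficulty.
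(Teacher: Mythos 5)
Your proposal is correct and takes essentially the same approach as the paper: you first show the sum has at most one positive term, using the independence of $T_k$ for same-class pairs and opening condition (ii) for cross-class pairs, and then bound that term via the minimizer $x_{i'}$ of $D(x_i,x_y)+r_y$ together with the opening condition and the geometric class bounds. The only (cosmetic) difference is that you bound $r_{j^*}$ itself through a case split on $\overline{r}_i \geq r_{j^*}$, placing $x_{i'}$ within $2r_{j^*}$ of $x_{j^*}$, whereas the paper argues by contradiction from $c_0\overline{r}_i < r_j - D(x_j,x_i)$ and places $x_{i'}$ within $r_j$; both rest on the identical ideas and your version is sound (indeed it proves the slightly stronger bound $r_{j^*} \leq c_0\overline{r}_i$).
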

\begin{proof}
We begin by observing that we cannot simultaneously have $D(x_j,x_i) \leq r_j$
and $D(x_l,x_i) \leq r_l$ for $x_j, x_l \in F^*$ and $j \neq l$. Indeed, if this
were the case, then $D(x_j,x_l) \leq r_j + r_l$. If $x_j$ and $x_l$ were in the
same class $V_y$, then they would be adjacent in $H_y$; this is impossible, for
then they could not both be members of $T_y$ (for a node in $V_y$, membership in
$T_y$ is necessary to join $F^*$). If $x_j$ and $x_l$ were in different classes,
assume WLOG that $r_j < r_l$. Then $D(x_j,x_l) \leq r_j + r_l \leq 2 r_l$, and
$x_l$ should not have opened. These contradictions imply that there is at most
one node $x_j \in F^*$ for which $D(x_j,x_i) \leq r_j$.

For the rest of this lemma, then, assume that $x_j \in F^*$ is the unique open
node such that $D(x_j,x_i) \leq r_j$ (if such a $x_j$ does not exist, there is
nothing to prove). Note that $x_i$ cannot be of a lower class than $x_j$ (for
else $x_j$ would not have opened). Consequently, $r_j < c_0 \cdot r_i$.

Now, suppose that $c_0 \overline{r}_i < r_j - D(x_j,x_i)$. As before, let
$x_{i'}$ be a minimizer for $D(x_i,x_y) + r_y$ (where $x_{i'}$ may be $x_i$
itself). Then $c_0 \cdot D(x_i,x_{i'}) + c_0 \cdot r_{i'} < r_j - D(x_j,x_i)$,
so we have (i) $c_0 r_{i'} < r_j$ (and $x_{i'}$ is in a lower class than $x_j$)
and (ii) $D(x_j,x_{i'}) \leq D(x_j,x_i) + D(x_i,x_{i'}) < r_j$, which means
that $x_j$ should not have opened. This contradiction completes the proof of the
lemma.
\end{proof}

\begin{lemma}
$FacLoc(F^*) \leq 6 \cdot (4 c_0^2 s + 4 c_0^2 + c_0) \cdot OPT.$
\label{lemma:upper_bound}
\end{lemma}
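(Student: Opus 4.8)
The plan is to combine the three preceding lemmas into a single charging argument, treating them as black boxes. I would start from the identity $FacLoc(F^*) = \sum_{i=1}^n charge(x_i, F^*)$, which holds for every configuration and in particular for the set $F^*$ opened by Algorithm \ref{alg:FacLocAlg}. The key structural fact is that each summand splits into two pieces, $charge(x_i, F^*) = D(x_i, F^*) + \sum_{x_j \in F^*} \max\{0, r_j - D(x_j, x_i)\}$, and each piece has already been bounded above by a multiple of $\overline{r}_i$.

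Concretely, I would invoke Lemma \ref{lemma:open_fac_dist} to bound the connection-cost term $D(x_i, F^*)$ by $(s+1) \cdot 4 c_0^2 \cdot \overline{r}_i$, and Lemma \ref{lemma:open_fac_contrib} to bound the opening-cost-contribution term $\sum_{x_j \in F^*} \max\{0, r_j - D(x_j, x_i)\}$ by $c_0 \cdot \overline{r}_i$. Adding these and expanding $(s+1) \cdot 4 c_0^2 + c_0 = 4 c_0^2 s + 4 c_0^2 + c_0$ yields the per-node estimate $charge(x_i, F^*) \leq (4 c_0^2 s + 4 c_0^2 + c_0) \cdot \overline{r}_i$. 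Summing over all $i$ then gives $FacLoc(F^*) \leq (4 c_0^2 s + 4 c_0^2 + c_0) \cdot \sum_{i=1}^n \overline{r}_i$.

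Finally, I would close the loop using the lower bound. Applying Lemma \ref{lemma:lower_bound} to an optimal configuration gives $OPT \geq (\sum_{i=1}^n \overline{r}_i)/6$, equivalently $\sum_{i=1}^n \overline{r}_i \leq 6 \cdot OPT$. Substituting this into the displayed inequality produces the claimed bound $FacLoc(F^*) \leq 6 \cdot (4 c_0^2 s + 4 c_0^2 + c_0) \cdot OPT$, completing the proof.

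Since every quantitative estimate has already been established in the earlier lemmas, this step is essentially pure assembly, and I anticipate no genuine obstacle. The only points requiring care are matching each half of $charge(x_i, F^*)$ to the correct lemma and carrying out the arithmetic $(s+1) \cdot 4 c_0^2 + c_0$ exactly as it appears in the statement. The real difficulty was front-loaded into Lemma \ref{lemma:open_fac_dist} (a telescoping geometric-series argument over the class hierarchy) and into the lower bound of Lemma \ref{lemma:lower_bound}; here those results are simply invoked.
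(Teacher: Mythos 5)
Your proposal is correct and matches the paper's own proof essentially line for line: the same decomposition of $charge(x_i,F^*)$ into the two terms bounded by Lemmas \ref{lemma:open_fac_dist} and \ref{lemma:open_fac_contrib}, the same summation yielding $(4c_0^2 s + 4c_0^2 + c_0)\sum_{i=1}^n \overline{r}_i$, and the same final invocation of Lemma \ref{lemma:lower_bound} to replace $\sum_{i=1}^n \overline{r}_i$ by $6 \cdot OPT$. Nothing to correct.
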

\begin{proof}
Combining the Lemmas \ref{lemma:open_fac_dist} and \ref{lemma:open_fac_contrib}
gives
\begin{align*}
FacLoc(F^*) &= \sum\limits_{i=1}^n charge(x_i,F^*) = \sum\limits_{i=1}^n
\left[D(x_i,F^*) + \sum_{x_j \in F^*} \max \{0, r_j - D(x_j,x_i)\}\right]\\[1mm]
&\leq \sum\limits_{i=1}^n
\left[(4 c_0^2 s + 4 c_0^2) \cdot \overline{r}_i + c_0 \overline{r}_i\right]
\leq (4 c_0^2 s + 4 c_0^2 + c_0) \cdot \sum\limits_{i=1}^n \overline{r}_i\\
\end{align*}

\noindent
From Lemma \ref{lemma:lower_bound}, we know that $\sum_{i=1}^n \overline{r}_i \le 6 \cdot OPT$.
Combining this fact with the above inequality, yields the result.
\end{proof}

\noindent
Lemma \ref{lemma:facloc_running_time} on the running time of the algorithm, combined with the above lemma on the approximation 
factor, yield the following result.
\begin{theorem}
Algorithm \ref{alg:FacLocAlg} (\textsc{FacilityLocation}) computes an
$O(s)$-factor approximation to \textsc{CliqueFacLoc} in $O(\mathcal{T}(n,s))$
rounds, where $\mathcal{T}(n,s)$ is the running time of the \textsc{RulingSet}()
procedure called with argument $s$.
\label{theorem:Reduction}
\end{theorem}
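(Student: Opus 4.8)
The plan is to obtain this theorem as an immediate consequence of the two principal results already established for Algorithm \ref{alg:FacLocAlg}: the round-complexity bound of Lemma \ref{lemma:facloc_running_time} and the cost bound of Lemma \ref{lemma:upper_bound}. Since the statement makes two distinct assertions---one on the number of communication rounds and one on the approximation ratio---I would treat each by invoking the corresponding lemma and then perform a single asymptotic simplification.

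First I would settle the running-time claim. Lemma \ref{lemma:facloc_running_time} already asserts that the algorithm terminates in $O(\mathcal{T}(n,s))$ rounds, where $\mathcal{T}(n,s)$ is exactly the cost of the single invocation of \textsc{RulingSet}$(\bigcup_k H_k, s)$ in Step 4. Every other operation---broadcasting the $r_i$ values in Stage 1, locally computing the partition into classes $\{V_k\}$, broadcasting ruling-set membership in Step 5, and announcing open/closed status in Step 7---contributes only a constant number of additional rounds and is already absorbed into that bound. Hence the round-complexity portion follows by direct citation, with no further work.

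Next I would address the approximation guarantee. Lemma \ref{lemma:upper_bound} supplies the explicit inequality $FacLoc(F^*) \leq 6 \cdot (4 c_0^2 s + 4 c_0^2 + c_0) \cdot OPT$ for the output configuration $F^*$. The only remaining observation is that $c_0 = 1 + \tfrac{1}{\sqrt{2}}$ is a fixed absolute constant, chosen in Step 2 and independent of both the instance size $n$ and the sparsity parameter $s$. Consequently $6 \cdot (4 c_0^2 s + 4 c_0^2 + c_0)$ is an affine function of $s$ with constant coefficients, so it is $O(s)$, which is precisely the claimed factor. Combining this with the round bound above yields the theorem.

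At the level of this theorem there is essentially no substantive obstacle, since all of the analytic content resides in the supporting lemmas: Lemma \ref{lemma:upper_bound} itself rests on the lower bound of Lemma \ref{lemma:lower_bound} together with the per-node charge estimates of Lemmas \ref{lemma:open_fac_dist} and \ref{lemma:open_fac_contrib}. The single point that genuinely must be checked here is that the constant $c_0$ does not scale with the instance, after which both assertions follow by direct appeal to the two lemmas. I therefore expect the proof to amount to a short paragraph that assembles the two bounds and records the constancy of $c_0$.
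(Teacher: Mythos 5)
Your proposal is correct and matches the paper's own argument exactly: the paper derives Theorem \ref{theorem:Reduction} by combining Lemma \ref{lemma:facloc_running_time} for the round bound with Lemma \ref{lemma:upper_bound} for the approximation factor, just as you do. Your added remark that $c_0 = 1 + \tfrac{1}{\sqrt{2}}$ is an absolute constant (so that $6 \cdot (4 c_0^2 s + 4 c_0^2 + c_0) = O(s)$) is a small but accurate explicit step that the paper leaves implicit.
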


\section{Computing a $2$-Ruling Set}
\label{sect:2Ruling}

The facility location algorithm in Section \ref{sect:FacLoc} depends on being
able to efficiently compute a $\beta$-ruling set, for small $\beta$, of an
arbitrary spanning subgraph $C'$ of a size-$n$ clique $C$. This section
describes how to compute a $2$-ruling set of $C'$ in a number of rounds which is
$O(\log \log n)$ in expectation.

\subsection{Deterministic Processing of a Sparse Subgraph}

For completeness we start by presenting a simple deterministic subroutine for
efficiently computing a maximal independent set of a \textit{sparse}, induced
subgraph of $C'$. Our algorithm is a simple load-balancing scheme. For a subset
$M \subseteq X$, we use $C'[M]$ to denote the subgraph of $C'$ induced by $M$,
and $E[M]$ and $e[M]$ to denote the set and number, respectively, of edges in
$C'[M]$. The subroutine we present below computes an MIS of $C'[M]$ in
$e[M] / n + O(1)$ rounds. Later, we use this repeatedly in situations where
$e[M] = O(n)$.

We assume that nodes in $X$ have unique identifiers and can therefore be totally
ordered according to these. Let $\rho_i \in \{0, 1, \ldots, n - 1\}$ denote the
rank of node $x_i$ in this ordering. Imagine (temporarily) that edges are
oriented from lower-rank nodes to higher-rank nodes and let $\mathcal{E}(x_i)$
denote the set of outgoing edges incident on $x_i$. Let $d_i$ denote
$|\mathcal{E}(x_i)|$, the outdegree of $x_i$, and let
$D_i = \sum_{j: \rho_j < \rho_i} d_j$ denote the outdegree sum of lower-ranked
nodes.

The subroutine shares the entire topology of $C'[M]$ with all nodes in the
network. To do this efficiently, we map each edge $e \in E[M]$ to a node in $X$.
Information about $e$ will be sent to the node to which $e$ is mapped. Each node
will then broadcast information about all edges that have been mapped to it. See
Algorithm \ref{alg:CCMISAlg}.

\begin{algo}
\textbf{Input:} A subset of nodes $M \subseteq X$\\
\textbf{Output:} An MIS $L$ of $C'[M]$
\begin{tabbing}
......\=a....\=b....\=c....\=d....\=e....\=f....\=g....\=h......\kill
1.\>Each node $x_i$ broadcasts its ID.\\
2.\>$x_i$ computes and broadcasts $d_i$.\\
3.\>$x_i$ assigns a distinct label $\ell(e)$ from
$\{D_i, D_i+1, \ldots, D_i + d_i - 1\}$ to each\\
\>\>incident outgoing edge $e$.\\
4.\>$x_i$ sends each outgoing edge $e$ to the node $x_j$ of rank
$\rho_j = (\ell(e) \mod{n})$.\\
5.\>$x_i$ receives and broadcasts all edges sent to it in the previous step, one
per round.\\
6.\>Each node $x_i$ computes $C'[M]$ from received edges and uses a
deterministic algorithm to\\
\>\>locally compute an MIS $L$.
\end{tabbing}
\caption{Deterministic MIS for Sparse Graphs}
\label{alg:CCMISAlg}
\end{algo}

\begin{lemma}
Algorithm \ref{alg:CCMISAlg} computes an MIS $L$ of $C'[M]$ in
$\frac{e[M]}{n} + O(1)$ rounds.
\label{lemma:mis_correct}
\end{lemma}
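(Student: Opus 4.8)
The plan is to separate the claim into a correctness part (Algorithm \ref{alg:CCMISAlg} outputs a genuine MIS of $C'[M]$, identical at every node) and a running-time part (the whole procedure fits in $e[M]/n + O(1)$ rounds), with both parts resting on a single load-balancing observation about the edge labels.

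First I would establish the key structural fact: the labels assigned in Step 3 form a bijection between $E[M]$ and $\{0, 1, \ldots, e[M] - 1\}$. Since $D_i = \sum_{j: \rho_j < \rho_i} d_j$ is exactly the prefix sum of outdegrees in rank order, the blocks $\{D_i, D_i + 1, \ldots, D_i + d_i - 1\}$ assigned to successive nodes are contiguous and disjoint, and their union is $\{0, 1, \ldots, \sum_i d_i - 1\}$. Because every edge of $C'[M]$ is the outgoing edge of exactly one (its lower-ranked) endpoint, $\sum_i d_i = e[M]$, giving the claimed bijection. I would also note that each node can carry out Steps 3--4 with the information it has: after the broadcasts in Steps 1--2 every node knows all IDs (hence all ranks) and all outdegrees, so it can compute $D_i$, the labels, and the target rank $\ell(e) \bmod n$ for each of its edges.

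Next I would bound the per-round communication. For Step 4, the $d_i$ labels on $x_i$'s outgoing edges occupy $d_i$ consecutive integers; since the outdegree satisfies $d_i \le n - 1 < n$, these labels are pairwise distinct modulo $n$, so $x_i$ sends at most one edge to each other node and Step 4 costs a single round in the clique. For Step 5, the node of rank $r$ receives precisely the edges whose labels are congruent to $r$ modulo $n$; among $\{0, 1, \ldots, e[M] - 1\}$ there are at most $\lceil e[M]/n \rceil \le e[M]/n + 1$ such labels, so each node broadcasts at most $e[M]/n + O(1)$ edges, one per round. Since every node can compute $e[M] = \sum_i d_i$ from the Step 2 broadcasts, all nodes agree on the number of rounds and stay synchronized.

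For correctness, the bijection guarantees that every edge of $C'[M]$ is mapped to exactly one node and then broadcast in Step 5, so after Step 5 every node has reconstructed the entire induced subgraph $C'[M]$. Running the same deterministic local MIS routine on identical inputs, all nodes compute the same set $L$, which is therefore a well-defined MIS of $C'[M]$. Summing the round counts --- $O(1)$ for Steps 1--4 and $e[M]/n + O(1)$ for Step 5 --- yields the stated bound. The main obstacle is the load-balancing analysis: the entire running time hinges on the two facts that the labels are a contiguous range $\{0, \ldots, e[M]-1\}$ and that reducing this range modulo $n$ spreads the edges so that no node sends or receives more than $e[M]/n + O(1)$ of them; the inequality $d_i < n$ is what makes the sending side collapse to a single round.
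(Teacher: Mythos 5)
Your proposal is correct and follows essentially the same route as the paper's proof: unique edge labels in $\{0,1,\ldots,e[M]-1\}$ via the prefix sums $D_i$, load balancing by reducing labels modulo $n$ so each node receives at most $e[M]/n + 1$ edges, and $O(1)$ rounds for Steps 1--4 plus $e[M]/n + 1$ rounds for the broadcasts in Step 5. You additionally spell out details the paper leaves implicit --- notably that $d_i < n$ makes each sender's labels distinct modulo $n$ (so Step 4 fits in one round) and that all nodes can compute $e[M]$ and hence stay synchronized --- which strengthens rather than changes the argument.
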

\begin{proof}
Each node $p_i$ reserves a distinct range
$\{D_i, D_i+1, \ldots, D_i + d_i - 1\}$ of size $d_i$ for labeling the $d_i$
outgoing edges incident on it (Steps 1-3). This implies that the edges in $E[M]$
get unique labels in the range $\{0, 1, \ldots, e[M] - 1\}$. Sending each edge
$e$ to a node $p_j$ with rank $\ell(e)\mod{n}$ means that each node receives at
most $e[M] / n + 1$ edges (Step 4). Note that Steps 1-4 take at most one round
each. Step 5 takes no more than $e[M] / n + 1$ rounds, as this is the maximum
number of edges that can be received by a node in Step 4.
\end{proof}

\subsection{Algorithm}

We are now ready to present an algorithm for computing a $2$-ruling set of $C'$
which is ``super-fast'' in expectation. We show that this algorithm has an
expected running time of $O(\log \log n)$ rounds. The algorithm proceeds in
\textit{Iterations} and in each Iteration some number of nodes leave $C'$. We
measure progress by the number of edges remaining in $C'$, as nodes leave $C'$.

In an Iteration $i$, each node remaining in $C'$ joins a ``Test'' set $T$
independently with probability $q = \sqrt{\frac{n}{m}}$ (Line 6), where
$m = e[C']$ is the number of edges remaining in $C'$ (we also use the notation
$m(i)$ to refer specifically to the value of $m$ at the beginning of, and
during, the $i$th iteration). The probability $q$ is set such that the expected
number of edges in $C'[T]$ is equal to $n$. Once the set $T$ is picked and each
node has broadcast its membership status with respect to $T$, each node can
broadcast its degree in $C'[T]$ and thus allow all nodes to compute $e[C'[T]]$
in a constant number of rounds.

If $e[C'[T]] \leq 4 n$, we use Algorithm \ref{alg:CCMISAlg} to process $C'[T]$
in $O(1)$ rounds, and then we delete $T$ and its neighborhood $N(T)$ from $C'$.
(Lines 7-10). Because $m = e[C']$ decreases, we can raise $q$ (Line 12) while
still having the expected number of edges in $C'[T]$ during the next iteration
bounded above by $n$. See Algorithm \ref{alg:2RulingSetAlg}.

If $e[C'[T]] > 4 n$, then Algorithm \ref{alg:CCMISAlg} is \textit{not} run, no
progress is made, and we proceed to the next iteration. We would like to mention
that the use of this cutoff is for ease of analysis only and is not
fundamentally important to the algorithm.

\begin{algo}
\textbf{Input:} A spanning subgraph $C'$ of the clique $C$\\
\textbf{Output:} A $2$-ruling set $R$ of $C'$
\begin{tabbing}
......\=a....\=b....\=c....\=d....\=e....\=f....\=g....\=h......\kill
1.\>$R := \emptyset$\\
2.\>$m := e[C']$ (Each node $x$ broadcasts its degree in $C'$ to all others,\\
\>\>\>after which each can compute $m$ locally)\\
3.\>$q := \sqrt{\frac{n}{m}}$\\
4.\>\textbf{while} $m > 2 n$ \textbf{do}\\[1mm]
\>\>\textbf{\emph{Start of Iteration:}}\\[1mm]
5.\>\>$T := \emptyset$\\
6.\>\>Each $x \in C'$ joins $T$ independently with probability $q$ and
broadcasts its choice.\\
7.\>\>\textbf{if} $e[C'[T]] \leq 4 n$ \textbf{then}\\
8.\>\>\>All nodes compute an MIS $L$ of $C'[T]$ using Algorithm
\ref{alg:CCMISAlg}.\\
9.\>\>\>$R := R \cup L$\\
10.\>\>\>Remove $(T \cup N(T))$ from $C'$.\\
11.\>\>$m := e[C']$\\
12.\>\>$q := \sqrt{\frac{n}{m}}$\\[1mm]
\>\>\textbf{\emph{End of Iteration}}\\[1mm]
13.\>All nodes compute an MIS $L$ of $C'$ ($C'$ has at most $2 n$ edges
remaining) using AlgorithmgSet \ref{alg:CCMISAlg}.\\
14.\>$R := R \cup L$\\
15.\>Output $R$.
\end{tabbing}
\caption{Super-Fast $2$-Ruling Set}
\label{alg:2RulingSetAlg}
\end{algo}

\subsection{Analysis}

\begin{lemma}
Algorithm \ref{alg:2RulingSetAlg} computes a $2$-ruling set of $C'$.
\label{lemma:2rule_correct}
\end{lemma}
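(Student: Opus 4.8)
The plan is to verify the two defining properties of a $2$-ruling set separately: that the output $R$ is an independent set in (the original) $C'$, and that every node of $X$ lies within $2$ hops of $R$ along edges of $C'$. Throughout, I would use the fact that the algorithm only ever deletes nodes from $C'$ (Line 10, together with the implicit shrinking of the working graph), so that the graph in hand during any iteration is a vertex-induced subgraph of the original $C'$. In particular, every edge present during any iteration is an edge of the original $C'$, and every path I exhibit is a genuine path in $C'$. I would also invoke Lemma \ref{lemma:mis_correct}, which guarantees that each call to Algorithm \ref{alg:CCMISAlg} returns a bona fide MIS of the induced subgraph handed to it.

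For independence, I would maintain the loop invariant that at the start of every iteration the current graph $C'$ contains no vertex adjacent (in the original $C'$) to any vertex already placed in $R$, and that $R$ is independent. The invariant holds initially since $R = \emptyset$. In an iteration that processes $T$ (i.e.\ $e[C'[T]] \le 4n$), the set $L$ added to $R$ is an MIS of $C'[T]$ and hence independent; moreover $L$ is drawn from the current graph, which by the invariant contains no neighbor of the old $R$, so $R \cup L$ remains independent. Finally, removing $T \cup N(T)$ deletes every vertex of the current graph adjacent to any vertex of $L \subseteq T$, restoring the invariant for the next iteration. The same reasoning applies to the terminal MIS computed at Line 13, so $R$ is independent in $C'$.

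For the ruling (domination) property, I would argue that every vertex $v \in X$ is within $2$ hops of $R$ by tracking the moment $v$ leaves the working graph. Each vertex is eventually either (a) removed at Line 10 during some processed iteration, or (b) still present when the loop exits, and thus covered by the MIS of Line 13. In case (b), $v$ lies in the residual graph whose MIS is added wholesale to $R$, so $v$ is at distance at most $1$ from $R$. In case (a), $v \in T \cup N(T)$ for the $T$ of that iteration. If $v \in T$, then since $L$ is an MIS of $C'[T]$ either $v \in L$ or $v$ has a neighbor in $L$, so $v$ is within $1$ hop of $R$. If $v \in N(T) \setminus T$, then $v$ has a neighbor $u \in T$ (an edge of $C'$), and $u$ in turn is within $1$ hop of $L$ by the MIS property, yielding a path $v \to u \to (\text{node of } L)$ of length at most $2$ in $C'$; hence $v$ is within $2$ hops of $R$.

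The step I expect to require the most care is the $v \in N(T) \setminus T$ case, where the factor of $2$ in ``$2$-ruling'' is genuinely used and where I must confirm that both edges of the covering path belong to the original $C'$; this follows from the observation that deletions only shrink the vertex set and never remove edges between surviving vertices. A secondary point to address is termination: the correctness argument presumes the while loop exits (so that Line 13 is reached and every surviving vertex is covered), which is established separately by the expected running-time analysis of Theorem \ref{theorem:2rule_time}. Conditioned on termination, the two arguments above show that $R$ is simultaneously independent and within-$2$-hops dominating, i.e.\ a $2$-ruling set of $C'$.
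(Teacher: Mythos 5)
Your proof is correct and takes essentially the same route as the paper's: independence follows because deleting $T \cup N(T)$ leaves no surviving node adjacent to $L \subseteq T$, and the $2$-ruling property follows from the case analysis on whether a node was in $T$, in $N(T)$, or survived to the final MIS of Line 13. Your explicit loop invariant and your remarks on termination and on induced subgraphs preserving edges merely formalize steps the paper's proof treats implicitly.
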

\begin{proof}
During any iteration in which $e[C'[T]] \leq 4 n$, the only nodes removed from
$C'$ are those in $T \cup N(T)$. Since we compute an MIS $L$ of $C'[T]$ and
include only these nodes in the final output $R$, currently every node in $T$ is
at distance at most one from a node in $L$ and every node in $N(T)$ is at
distance at most $2$ from a member of $L$. Furthermore, after deletion of
$T \cup N(T)$, no node remaining in $C'$ is a neighbor of any node in $T$, and
therefore no node that can be added to $R$ in the future will have any
adjacencies with nodes added to $R$ in this iteration. So $R$ will remain an
independent set. When the algorithm terminates, all nodes were either in $T$ or
in $N(T)$ at some point, and therefore $R$ is a $2$-ruling set of $C'$.
\end{proof}

\noindent
Define $L_k = n^{1 + 1 / 2^k}$ for $k = 0, 1, \ldots$. Think of the $L_k$'s as
specifying thresholds for the number of edges still remaining in $C'$.
(Initially, i.e., for $k = 0$, $L_0 = n^2$ is a trivial upper bound on the
number of edges in $C'$.) As the algorithm proceeds, we would like to measure
the number of rounds for the number of edges in $C'$ to fall from the threshold
$L_{k-1}$ to the threshold $L_k$. Note that the largest $k$ in which we are
interested is $k = \log \log n$, because for this value of $k$, $L_k = 2 n$ and
if the number of edges falls below this threshold we know how to process what
remains in $C'$ in $O(1)$ rounds. Let $S_k$ denote the smallest iteration index
$i$ at the start of which $e[C'] \leq L_k$. Define $\mathcal{T}(k)$ by
$\mathcal{T}(k) = S_k - S_{k-1}$, i.e., $\mathcal{T}(k)$ is the number of
iterations required to progress from having $L_{k-1}$ edges remaining in $C'$ to
having only $L_k$ edges remaining. We are interested in bounding
$\mathbf{E}(\mathcal{T}(k))$.

\begin{lemma}
For each $i \geq 1$, the probability that $e[C'[T]] \leq 4 n$ during the $i$th
iteration is at least $\frac{3}{4}$.
\label{lemma:constant_success}
\end{lemma}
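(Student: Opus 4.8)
The plan is to compute the expectation of $e[C'[T]]$ exactly and then finish with a one-line application of Markov's inequality. First I would observe that $e[C'[T]]$ is a nonnegative random variable: it counts, among the $m = e[C']$ edges present in $C'$ at the start of iteration $i$, those whose two endpoints are both selected into the test set $T$. For a fixed such edge $(u,v)$, its endpoints join $T$ independently, each with probability $q$, so the edge lands in $C'[T]$ with probability exactly $q^2$. By linearity of expectation over these $m$ indicator variables,
\[
\mathbf{E}[e[C'[T]]] = m \cdot q^2 = m \cdot \frac{n}{m} = n,
\]
where the second equality uses the definition $q = \sqrt{n/m}$ from Line 3 (equivalently Line 12) of Algorithm \ref{alg:2RulingSetAlg}. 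This recovers the design choice stated in the text: $q$ is tuned precisely so that the expected number of edges in $C'[T]$ equals $n$.

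With the expectation pinned down, Markov's inequality does the rest. Since $e[C'[T]] \geq 0$,
\[
\Pr[e[C'[T]] > 4 n] \leq \frac{\mathbf{E}[e[C'[T]]]}{4 n} = \frac{n}{4 n} = \frac{1}{4},
\]
and taking complements gives $\Pr[e[C'[T]] \leq 4 n] \geq \frac{3}{4}$, which is exactly the claim. Note that this bound holds uniformly for every iteration $i$, because the argument depends only on the relation $m q^2 = n$, which is re-established each time $q$ is reset in Line 12.

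I expect no genuine obstacle here. The only two points deserving a moment of care are (i) verifying that each edge survives into $C'[T]$ with the common factor $q^2$, which follows from the \emph{independent} inclusion of its two endpoints, and (ii) confirming the hypotheses of Markov's inequality, namely that $e[C'[T]]$ is nonnegative, which it plainly is. Crucially, no concentration inequality and no independence \emph{across} edges is required: linearity of expectation already determines the mean exactly, and the cutoff $4n$ sits comfortably above that mean $n$, so the crude Markov estimate already yields the constant success probability $\tfrac{3}{4}$ with room to spare.
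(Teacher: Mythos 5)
Your proposal is correct and is essentially identical to the paper's own proof: both compute $\mathbf{E}\left(e[C'[T]]\right) = m \cdot q^2 = n$ by linearity of expectation over edge indicators and then apply Markov's inequality with the cutoff $4n$. Your added remarks (that independence across edges is unnecessary and that only the relation $m q^2 = n$ matters in every iteration) are accurate clarifications of the same argument, not a different route.
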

\begin{proof}
In the $i$th iteration, each node remaining in $C'$ joins $T$ independently with
probability $\sqrt{\frac{n}{m}}$, where, as before, $m = m(i) = e[C']$ is the
number of edges remaining in $C'$. Therefore, for any edge remaining in $C'$,
the probability that both of its endpoints join $T$ (and hence that this edge is
included in $C'[T]$) is $\sqrt{\frac{n}{m}} \cdot \sqrt{\frac{n}{m}} =
\frac{n}{m}$. Thus the expected number of such edges, $\mathbf{E}(e[C'[T]])$, is
equal to $e[C'] \cdot \frac{n}{m} = n$. By Markov's inequality,
$\mathbf{P}(e[C'[T]] > 4 n) \leq \frac{n}{4 n} = \frac{1}{4}$.
\end{proof}

\begin{lemma}
For each $k \geq 1$, $\mathbf{E}(\mathcal{T}(k)) = O(1)$.
\label{lemma:progress_iterations}
\end{lemma}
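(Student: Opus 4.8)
The plan is to show that the expected number of iterations needed to push the edge count from the threshold $L_{k-1}$ down to $L_k$ is $O(1)$ by arguing that \emph{each} iteration in this phase, conditioned on the current state of $C'$, ends the phase with at least some universal constant probability $p_0 > \tfrac12$. Since distinct iterations use independent randomness and the edge count is monotonically non-increasing, $\mathcal{T}(k)$ is then stochastically dominated by a geometric random variable with success probability $p_0$, whence $\mathbf{E}(\mathcal{T}(k)) \le 1/p_0 = O(1)$. Throughout the phase counted by $\mathcal{T}(k)$ the edge count $m = e[C']$ satisfies $L_k < m \le L_{k-1}$; in particular $m \le L_{k-1}$, and the sampling probability is $q = \sqrt{n/m}$.

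The heart of the argument is a bound on the expected number of edges that survive the (hypothetical) removal of $T \cup N(T)$ in a single iteration. Let $m'$ denote the number of edges both of whose endpoints avoid $T \cup N(T)$. A node $w$ of degree $d_w$ survives only if every node of its closed neighborhood avoids $T$, an event of probability $(1-q)^{d_w+1} \le e^{-q d_w}$. Since each surviving edge contributes to the surviving-subgraph degree of two surviving nodes, $2m' \le \sum_{w\ \mathrm{survives}} d_w$. Using linearity of expectation together with the elementary fact $\max_{x\ge 0} x\,e^{-qx} = 1/(eq)$ (attained at $x = 1/q$), I would obtain
\[
\mathbf{E}[m'] \;\le\; \tfrac12 \sum_{w} d_w\, e^{-q d_w} \;\le\; \tfrac12\, n \cdot \frac{1}{eq} \;=\; \frac{\sqrt{nm}}{2e}.
\]
I then invoke the identity built into the thresholds: since $L_k = n^{1+1/2^k}$ and $L_{k-1} = n^{1+1/2^{k-1}}$, one has $\sqrt{n\,L_{k-1}} = L_k$. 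Combined with $m \le L_{k-1}$, this yields $\mathbf{E}[m'] \le L_k/(2e)$.

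The main obstacle is that the removal is performed only when the iteration is \emph{successful} (i.e.\ $e[C'[T]] \le 4n$), and the success event is correlated with $m'$, so conditioning on success could in principle inflate $m'$. I would handle this by noting that $m' \ge 0$ forces $\mathbf{E}[m'] \ge \mathbf{E}[m' \mid \mathrm{success}]\cdot \mathbf{P}(\mathrm{success})$, so by Lemma~\ref{lemma:constant_success} (which gives $\mathbf{P}(\mathrm{success}) \ge \tfrac34$),
\[
\mathbf{E}[m' \mid \mathrm{success}] \;\le\; \frac{\mathbf{E}[m']}{3/4} \;\le\; \frac{2}{3e}\,L_k .
\]
Applying Markov's inequality conditionally then gives $\mathbf{P}(m' \ge L_k \mid \mathrm{success}) \le \tfrac{2}{3e} < \tfrac12$. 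Hence in any iteration of the phase the event ``the iteration is successful \emph{and} the resulting edge count drops below $L_k$'' has probability at least $\mathbf{P}(\mathrm{success})\bigl(1 - \tfrac{2}{3e}\bigr) \ge \tfrac34\bigl(1 - \tfrac{2}{3e}\bigr) =: p_0 > \tfrac12$. On this event the actual removal is carried out (since it is a success) and the phase terminates, because the true edge count then equals $m' < L_k$. This establishes the per-iteration termination probability needed above, and the geometric-domination argument yields $\mathbf{E}(\mathcal{T}(k)) = O(1)$.
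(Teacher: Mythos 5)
Your proposal is correct and follows essentially the same route as the paper's proof: the same survival-probability bound giving $\mathbf{E}[m'] \le \frac{1}{2e}\sqrt{nm} \le L_k/(2e)$ via $z e^{-z} \le 1/e$, the same threshold identity $\sqrt{n L_{k-1}} = L_k$, Markov's inequality, the success probability $\ge \frac{3}{4}$ from Lemma~\ref{lemma:constant_success}, and geometric domination of $\mathcal{T}(k)$. The only difference is bookkeeping for the cutoff correlation: where the paper introduces a coupled variant of Algorithm~\ref{alg:2RulingSetAlg} that ignores the $4n$ cutoff in iteration $i$ and bounds $\mathbf{P}(E_2 \setminus E_1) \ge \mathbf{P}(E_2) - \mathbf{P}(E_1)$, you define the hypothetical post-removal edge count $m'$ directly (a function of the same sample $T$) and handle its correlation with the success event via $\mathbf{E}[m' \mid \mathrm{success}] \le \mathbf{E}[m']/\mathbf{P}(\mathrm{success})$ followed by conditional Markov --- an equally valid, slightly more direct treatment of the same issue that even yields a marginally better per-iteration constant.
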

\begin{proof}
Suppose that, after $i - 1$ iterations, $m = m(i - 1) = e[C'] \leq L_{k-1}$. We
analyze the expected number of edges remaining in $C'$ after the next iteration.

Let Algorithm \ref{alg:2RulingSetAlg}$_i^*$ refer to the variation on Algorithm
\ref{alg:2RulingSetAlg} in which, during iteration $i$ only, the cutoff value of
$4 n$ in Line 7 is ignored; i.e., an MIS is computed, and nodes subsequently
removed from $C'$, regardless of the number of edges in $C'[T]$ (during
iteration $i$).
We view Algorithms \ref{alg:2RulingSetAlg} and \ref{alg:2RulingSetAlg}$_i^*$
as being \textit{coupled} through the first $i$ iterations; in other words,
the two algorithms have the same history and make the same progress
during the first $i - 1$ iterations.

Let $m^*(j)$ be the random variable which is the number of edges remaining at
the beginning of iteration $j$ with Algorithm \ref{alg:2RulingSetAlg}$_i^*$.
Let $\mathrm{deg}_{j}^*(x)$ be the degree of $x$ in $C'$ under Algorithm
\ref{alg:2RulingSetAlg}$_i^*$ at the beginning of iteration $j$. We can bound
the expected value of $m^*(i + 1)$ by bounding, for each $x$,
$\mathbf{E}(\mathrm{deg}_{i+1}^*(x))$.

In turn, $\mathbf{E}(\mathrm{deg}_{i+1}^*(x))$ can be bounded above
by the degree of $x$ at the beginning of the $i$th iteration,
$\mathrm{deg}_i^*(x)$, multiplied by the probability that $x$ remains in $C'$
after iteration $i$. (The degree of $x$ can be considered to be $0$ if $x$ has
been removed from $C'$, for the purpose of computing the number of edges
remaining in the subgraph. Furthermore, under Algorithm
\ref{alg:2RulingSetAlg}$_i^*$, we may upper bound the probability of $x$
remaining in $C'$ after the $i$th iteration by the probability that no neighbor
of $x$ joins $T$ during iteration $i$.

Under Algorithm \ref{alg:2RulingSetAlg}$_i^*$, then, the expected number of
edges remaining in $C'$ after iteration $i$ is
\begin{align*}
\mathbf{E}\left(m^*(i + 1)\right) &= \mathbf{E}\left(\frac{1}{2} %
\sum\limits_{x \in C} \mathrm{deg}_{i+1}^*(x)\right)\\[1mm]
&= \frac{1}{2} \sum\limits_{x \in C} %
\mathbf{E}\left(\mathrm{deg}_{i+1}^*(x)\right)\\[1mm]
&\leq \frac{1}{2} \sum\limits_{x \in C} %
\mathbf{P}\left(x \notin T \cup N(T) %
\text{ under Alg. \ref{alg:2RulingSetAlg}$_i^*$}\right) \cdot %
\mathrm{deg}_i^*(x)\\[1mm]
&\leq \frac{1}{2} \sum\limits_{x \in C} %
\left(1 - \sqrt{\frac{n}{m^*(i)}}\right)^{\mathrm{deg}_i^*(x)} \cdot %
\mathrm{deg}_i^*(x)\\[1mm]
&\leq \frac{1}{2} \sum\limits_{x \in C} %
\left(e^{-\sqrt{\frac{n}{m^*(i)}}}\right)^{\mathrm{deg}_i^*(x)} \cdot %
\mathrm{deg}_i^*(x)\\[1mm]
&\leq \frac{1}{2} \sqrt{\frac{m^*(i)}{n}} \cdot \sum\limits_{x \in C} %
\left[\sqrt{\frac{n}{m^*(i)}} \mathrm{deg}_i^*(x) \cdot %
e^{-\sqrt{\frac{n}{m^*(i)}} \mathrm{deg}_i^*(x)}\right]\\
\end{align*}
Note that $z \cdot e^{-z} \leq \frac{1}{e}$ for all $z \in \mathbb{R}$, so the
summand in this last quantity can be replaced by $\frac{1}{e}$. We then have
\begin{align*}
\mathbf{E}\left(\frac{1}{2} \sum\limits_{x \in C} %
\mathrm{deg}_{i+1}^*(x)\right) &\leq \frac{1}{2} \sqrt{\frac{m^*(i)}{n}} \cdot %
\sum\limits_{x \in C} \frac{1}{e}\\[1mm]
&= \frac{1}{2} \sqrt{\frac{m^*(i)}{n}} \cdot \frac{n}{e}\\[1mm]
&= \frac{1}{2 e} \sqrt{n \cdot m^*(i)}\\
\end{align*}
Since Algorithms \ref{alg:2RulingSetAlg} and
\ref{alg:2RulingSetAlg}$_i^*$ are coupled through the first $i - 1$
iterations, $m^*(i) = m(i)$ and this last quantity satisfies
$$\frac{1}{2 e} \sqrt{n \cdot m^*(i)} = \frac{1}{2 e} \sqrt{n \cdot m(i)} \leq \frac{1}{2 e} \sqrt{n^{2 + 1 / 2^{k-1}}} = \frac{L_k}{2 e}.$$
Therefore, the expected value of $m^*(i + 1)$ is bounded above by
$\frac{1}{2 e} L_k$, and so by Markov's inequality,
$\mathbf{P}(m^*(i + 1) > L_k \mid m^*(i) \leq L_{k-1}) \leq
\frac{L_k}{2 e \cdot L_k} < \frac{1}{4}$.

\noindent
As mentioned before, Algorithm \ref{alg:2RulingSetAlg} and Algorithm
\ref{alg:2RulingSetAlg}$_i^*$ have the same history through the first $i - 1$ iterations. As
well, they also have the same history through the $i$th iteration in the event
that $e[C'[T]] \leq 4 n$ during the $i$th iteration. During iteration $i$, if
$e[C'[T]] > 4 n$, then Algorithm \ref{alg:2RulingSetAlg}$_i^*$ may still make
progress (adding nodes to the $2$-ruling set), whereas Algorithm
\ref{alg:2RulingSetAlg} makes none.

Let $E_1 = \{e[C'[T]] > 4 n \text{ in iteration } i\}$.
By Lemma \ref{lemma:constant_success},
$\mathbf{P}(E_1 \mid m(i) \leq L_{k-1}) \leq \frac{1}{4}$.
Let $E_2 = \{m^*(i + 1) \leq L_k\}$. By the earlier
analysis, $\mathbf{P}(E_2 \mid m^*(i) \leq L_{k-1}) > \frac{3}{4}$. 
Thus the event $E_2 \setminus E_1$ conditioned on $m^*(i) \leq L_{k-1}$ is such that (i) Algorithm
\ref{alg:2RulingSetAlg} is identical to (with the same history and behavior as)
Algorithm \ref{alg:2RulingSetAlg}$_i^*$ through iteration $i$, and (ii)
$m(i + 1) = m^*(i + 1) \leq L_k$. Thus,
\begin{align*}
\mathbf{P}(E_2 \setminus E_1 \mid m^*(i) \leq L_{k-1}) &\geq %
\mathbf{P}(E_2 \mid m^*(i) \leq L_{k-1}) - %
\mathbf{P}(E_1 \mid m^*(i) \leq L_{k-1})\\[1mm]
&\geq \frac{3}{4} - \frac{1}{4}\\[1mm]
&= \frac{1}{2}\\
\end{align*}
Thus, given that $m(i) = m^*(i) \leq L_{k-1}$, with probability at least
$\frac{1}{2}$ we have $m(i + 1) = m^*(i + 1) \leq L_k$, and Algorithm
\ref{alg:2RulingSetAlg} makes progress by one level. Since this holds for every
$i$, the expected number of additional iterations required under Algorithm
\ref{alg:2RulingSetAlg} before $m \leq L_k$ is a (small) constant ($2$), and
hence $\mathbf{E}(\mathcal{T}(k)) = O(1)$.
\end{proof}

\begin{theorem}
Algorithm \ref{alg:2RulingSetAlg} computes a $2$-ruling set on the subgraph $C'$
of the clique $C$ and has an expected running time of $O(\log \log n)$ rounds.
\label{theorem:2rule_time}
\end{theorem}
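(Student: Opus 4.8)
The plan is to prove Theorem~\ref{theorem:2rule_time} by combining the correctness claim (already established in Lemma~\ref{lemma:2rule_correct}) with a bound on the total expected number of iterations, which in turn follows from the per-level analysis of Lemma~\ref{lemma:progress_iterations}. Correctness is free: Lemma~\ref{lemma:2rule_correct} already shows that the set $R$ output by Algorithm~\ref{alg:2RulingSetAlg} is a $2$-ruling set of $C'$, since every node is eventually either in some $T$ (hence within distance $1$ of an MIS member) or in some $N(T)$ (hence within distance $2$), and $R$ stays independent across iterations. So the entire burden of the theorem is the running-time bound.

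For the running time, I would decompose the execution into the $\log\log n$ \emph{levels} defined by the thresholds $L_k = n^{1+1/2^k}$, running from the trivial initial bound $L_0 = n^2$ down to $L_{\log\log n} = 2n$, below which the residual graph is processed in $O(1)$ rounds by Algorithm~\ref{alg:CCMISAlg} (Lemma~\ref{lemma:mis_correct}, invoked on a graph with $O(n)$ edges). The total number of iterations is $\sum_{k=1}^{\log\log n} \mathcal{T}(k)$, where $\mathcal{T}(k) = S_k - S_{k-1}$ is the number of iterations spent crossing from level $k-1$ to level $k$. By linearity of expectation,
\[
\mathbf{E}\!\left(\sum_{k=1}^{\log\log n} \mathcal{T}(k)\right)
= \sum_{k=1}^{\log\log n} \mathbf{E}(\mathcal{T}(k)).
\]
Lemma~\ref{lemma:progress_iterations} gives $\mathbf{E}(\mathcal{T}(k)) = O(1)$ for each $k$, so the sum is $O(\log\log n)$. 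Since each iteration consists only of broadcasting membership and degree information, computing $e[C'[T]]$, and at most one call to Algorithm~\ref{alg:CCMISAlg} on a subgraph with at most $4n$ edges (Line~7's cutoff guarantees this, so each invocation costs $O(1)$ rounds by Lemma~\ref{lemma:mis_correct}), every iteration runs in $O(1)$ rounds. Hence the expected total round count is $O(\log\log n)$, and adding the final $O(1)$-round cleanup in Lines~13--14 does not change this.

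The main subtlety — and the step I would take the most care with — is justifying that summing $\mathbf{E}(\mathcal{T}(k))$ over the levels is legitimate, because Lemma~\ref{lemma:progress_iterations} bounds $\mathbf{E}(\mathcal{T}(k))$ via a geometric-tail argument that is implicitly \emph{conditioned} on the event $\{m(S_{k-1}) \le L_{k-1}\}$, i.e.\ on having already reached level $k-1$. To make the unconditional sum rigorous I would note that the per-iteration success probability of crossing a level is at least $\tfrac12$ \emph{regardless of the history}, as the proof of Lemma~\ref{lemma:progress_iterations} establishes the bound $\mathbf{P}(m(i+1) \le L_k \mid m(i) \le L_{k-1}) \ge \tfrac12$ for \emph{every} $i$ uniformly; this means each $\mathcal{T}(k)$ is stochastically dominated by a geometric random variable with mean $2$, independently of the values of $\mathcal{T}(1),\ldots,\mathcal{T}(k-1)$, so the expectations add cleanly. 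I would also remark that the cutoff at $L_{\log\log n} = 2n$ matches the \texttt{while} condition $m > 2n$ in Line~4, so the loop terminates exactly when the deterministic subroutine becomes applicable, tying the analysis to the actual control flow of the algorithm.
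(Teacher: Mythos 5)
Your proposal is correct and follows essentially the same route as the paper's own proof: decompose the execution into the $\log\log n$ levels $L_k = n^{1+1/2^k}$, invoke Lemma~\ref{lemma:progress_iterations} to get $\mathbf{E}(\mathcal{T}(k)) = O(1)$ per level, note that the Line~7 cutoff makes each iteration (and the final cleanup via Lemma~\ref{lemma:mis_correct}) cost $O(1)$ rounds, and sum by linearity of expectation. Your explicit handling of the conditioning subtlety --- observing that the probability-$\tfrac12$ progress bound holds uniformly over histories, so each $\mathcal{T}(k)$ is stochastically dominated by a mean-$2$ geometric and the expectations add unconditionally --- is a welcome sharpening of a point the paper leaves implicit in the closing sentence of Lemma~\ref{lemma:progress_iterations}, but it does not change the structure of the argument.
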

\begin{proof}
By Lemma \ref{lemma:2rule_correct}, the output $R$ is a $2$-ruling set of $C'$.
To bound the expected running time, observe that 
$$L_{\log \log n} = n^{1 + 1 / 2^{\log \log n}} = n^{1 + 1 / \log n} = n^{1 + \log_n 2} = 2 n,$$
which is the point at which Algorithm \ref{alg:2RulingSetAlg} exits the
\textbf{while} loop and runs one deterministic iteration to process the
remaining (sparse) graph. Now, given some history, $\mathcal{T}(k)$ is the
random variable which is the number of iterations necessary to progress from
having at most $L_{k-1}$ edges remaining in $C'$ to having at most $L_k$ edges
remaining, so let $I_{k,j}$ be the running time, in rounds, of the $j$th such
iteration (for $j = 1, \ldots, \mathcal{T}(k)$; as well, $\mathcal{T}(k)$ may be
$0$). Note that $I_{k,j}$ is bounded by a constant due to the cutoff condition
of Line 7.

The running time of Algorithm \ref{alg:2RulingSetAlg} is thus
$O(1) + \sum_{k=1}^{\log \log n} \sum_{j=1}^{\mathcal{T}(k)} I_{k,j}$, and the
expected running time can be described as
\begin{align*}
\mathbf{E}\left(O(1) + \sum\limits_{k=1}^{\log \log n} %
\sum\limits_{j=1}^{\mathcal{T}(k)} I_{k,j}\right) &= %
O(1) + \sum\limits_{k=1}^{\log \log n} %
\mathbf{E}\left(\sum\limits_{j=1}^{\mathcal{T}(k)} I_{k,j}\right)\\[1mm]
&\leq O(1) + \sum\limits_{k=1}^{\log \log n} %
\mathbf{E}\left(\sum\limits_{j=1}^{\mathcal{T}(k)} O(1)\right)\\[1mm]
&= O(1) + \sum\limits_{k=1}^{\log \log n} %
O\left(\mathbf{E}\left(\mathcal{T}(k)\right)\right)\\[1mm]
&= O(1) + \sum\limits_{k=1}^{\log \log n} O(1)\\[1mm]
&= O(\log \log n)\\
\end{align*}
which completes the proof.
\end{proof}

Using Algorithm \ref{alg:2RulingSetAlg} as a specific instance of the procedure
\textsc{RulingSet}() for $s = 2$ and combining Theorems \ref{theorem:Reduction}
and \ref{theorem:2rule_time} leads us to the following result.

\begin{theorem}
There exists an algorithm that solves the \textsc{CliqueFacLoc} problem with an
expected running time of $O(\log \log n)$ communication rounds.
\label{theorem:MainResult}
\end{theorem}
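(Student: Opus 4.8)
Theorem \ref{theorem:MainResult} is a pure composition result, so the plan is essentially to identify the two prior theorems that do all the work and check that their hypotheses line up. The proof will invoke Theorem \ref{theorem:Reduction}, which says that Algorithm \ref{alg:FacLocAlg} computes an $O(s)$-approximation to \textsc{CliqueFacLoc} in $O(\mathcal{T}(n,s))$ rounds, where $\mathcal{T}(n,s)$ is the running time of whatever \textsc{RulingSet}() procedure is plugged in with sparsity parameter $s$. The whole point is that these two bounds are decoupled: the approximation guarantee comes from the structural analysis of Stages 1--3 (Lemmas \ref{lemma:lower_bound} and \ref{lemma:upper_bound}), while the round complexity is entirely inherited from the subroutine used in Step 4.

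The plan is to instantiate \textsc{RulingSet}() with the specific algorithm of Section \ref{sect:2Ruling}, namely Algorithm \ref{alg:2RulingSetAlg}, taking $s = 2$. By Theorem \ref{theorem:2rule_time}, Algorithm \ref{alg:2RulingSetAlg} correctly computes a $2$-ruling set of an arbitrary spanning subgraph $C'$ of the clique and has expected running time $O(\log\log n)$; that is, $\mathbf{E}[\mathcal{T}(n,2)] = O(\log\log n)$. Substituting $s = 2$ into Theorem \ref{theorem:Reduction} then gives two conclusions simultaneously: first, Algorithm \ref{alg:FacLocAlg} produces an $O(2) = O(1)$-factor approximation to \textsc{CliqueFacLoc}; second, its running time is $O(\mathcal{T}(n,2))$ rounds, whose expectation is $O(\log\log n)$ by linearity of expectation applied to the round-complexity bound. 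Combining these yields an $O(1)$-approximation algorithm with expected running time $O(\log\log n)$, which is exactly the claim.

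The one point that needs a sentence of care, rather than being a genuine obstacle, is that $\mathcal{T}(n,2)$ is a random variable (the $2$-ruling set computation is randomized) while the approximation guarantee of Theorem \ref{theorem:Reduction} is deterministic. The correctness of the output --- that $F^*$ is an $O(1)$-approximation --- holds with certainty for \emph{any} $2$-ruling set returned in Step 4, because Lemmas \ref{lemma:open_fac_dist} and \ref{lemma:open_fac_contrib} only use the defining property of a $2$-ruling set and make no probabilistic assumptions; Algorithm \ref{alg:2RulingSetAlg} always returns a valid $2$-ruling set by Lemma \ref{lemma:2rule_correct}. Thus the randomness affects only the running time, and the $O(\log\log n)$ bound there is in expectation. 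So the proof should explicitly state that the approximation factor is guaranteed deterministically and only the time bound is in expectation, to avoid conflating the two modes of guarantee. I do not anticipate any real difficulty: everything hard has already been done in the two cited theorems, and this final statement is a clean assembly of them.
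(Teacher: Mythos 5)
Your proposal is correct and follows exactly the paper's own route: the paper proves Theorem \ref{theorem:MainResult} by instantiating \textsc{RulingSet}() with Algorithm \ref{alg:2RulingSetAlg} for $s=2$ and combining Theorems \ref{theorem:Reduction} and \ref{theorem:2rule_time}. Your extra remark distinguishing the deterministic approximation guarantee (valid for any $2$-ruling set, by Lemma \ref{lemma:2rule_correct}) from the expected running-time bound is a sound clarification of a point the paper leaves implicit, not a deviation in approach.
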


\section{Concluding Remarks}

It is worth noting that under special circumstances an $O(1)$-ruling set of a
spanning subgraph of a clique can be computed even more quickly. For example, if
the subgraph of $C$ induced by the nodes in class $V_k$ is growth-bounded for
each $k$, then we can use the Schneider-Wattenhofer
\cite{SchneiderWattenhofer08} result to compute an MIS for $H_k$ in
$O(\log^* n)$ rounds (in the $\mathcal{CONGEST}$ model). It is easy to see that
if the metric space $(X, D)$ has constant doubling dimension, then $H_k$ would
be growth-bounded for each $k$. A Euclidean space of constant dimension has
constant doubling dimension and therefore this observation applies to
constant-dimension Euclidean spaces. This discussion is encapsulated in the
following theorem.

\begin{theorem}
The \textsc{CliqueFacLoc} problem can be solved in $O(\log^* n)$ rounds on a
metric space of constant doubling dimension.
\end{theorem}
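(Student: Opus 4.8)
The plan is to leverage Theorem~\ref{theorem:Reduction}, which already reduces \textsc{CliqueFacLoc} to the computation of an $s$-ruling set and guarantees an $O(s)$-approximation in $O(\mathcal{T}(n,s))$ rounds. Since $s$ is a constant here (we will in fact use $s=1$, i.e., an MIS), the approximation factor is automatically $O(1)$, so the entire burden of the proof is on the running time, i.e., on exhibiting a fast \textsc{RulingSet} procedure specialized to the doubling-dimension setting. Thus the first step I would take is to recall the structure of Algorithm~\ref{alg:FacLocAlg}: it partitions the nodes into classes $V_k$ by their characteristic radii $r_i$ and builds, for each $k$, the graph $H_k$ on $V_k$ with the adjacency rule $D(x_i,x_j)\le r_i+r_j$. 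The key point is that within a single class $V_k$ all radii agree up to the constant factor $c_0=1+\frac{1}{\sqrt 2}$, so the adjacency condition $D(x_i,x_j)\le r_i+r_j$ forces adjacent nodes to lie within a $\Theta(1)$-factor-comparable distance of one another.

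Second, I would argue that each $H_k$ is \emph{growth-bounded} whenever $(X,D)$ has constant doubling dimension. Recall that a graph is growth-bounded if the number of independent nodes within any fixed radius-$t$ ball (in graph distance) is bounded by a polynomial $f(t)$ of constant degree. Here the relevant observation is that an independent set in $H_k$ consists of nodes that are pairwise at metric distance greater than $r_i+r_j \ge 2c_0^k r_0$, i.e., pairwise well-separated at the metric scale associated with class $k$; meanwhile all such nodes within $t$ graph-hops lie inside a metric ball whose radius is $O(t)$ times that same scale. A standard packing argument in a doubling metric then bounds the number of such well-separated points inside that ball by $2^{O(t\cdot\mathrm{ddim})}$, which is a function of $t$ alone since $\mathrm{ddim}=O(1)$. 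This is exactly the growth-boundedness condition, so I would state and verify it as the central structural lemma. I expect this packing/separation step to be the main obstacle, since one must carefully match the graph-hop radius in $H_k$ to the correct metric scale and confirm that the doubling constant, not the number of points, controls the ball cardinality.

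Third, with growth-boundedness of each $H_k$ in hand, I would invoke the Schneider--Wattenhofer result~\cite{SchneiderWattenhofer08}, which computes an MIS of a growth-bounded graph in $O(\log^* n)$ rounds in the $\mathcal{CONGEST}$ model. Running this simultaneously across all classes $V_k$ yields an MIS of $\bigcup_k H_k$, i.e., a $1$-ruling set, in $O(\log^* n)$ rounds; the simultaneity is unproblematic because the classes are vertex-disjoint, so the per-edge communication in the $\mathcal{CONGEST}$ model does not interfere across classes. This gives $\mathcal{T}(n,1)=O(\log^* n)$ for metrics of constant doubling dimension.

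Finally, I would assemble the pieces: plugging $\mathcal{T}(n,1)=O(\log^* n)$ and $s=1$ into Theorem~\ref{theorem:Reduction} yields an $O(1)$-approximation for \textsc{CliqueFacLoc} running in $O(\log^* n)$ rounds, which is precisely the claimed statement. The only point requiring a little care in this last assembly is to confirm that the reduction's correctness does not rely on $s=2$ specifically---but since Theorem~\ref{theorem:Reduction} is stated for a general sparsity parameter $s$ and an MIS is a $1$-ruling set, substituting $s=1$ is legitimate and in fact tightens the approximation constant. Hence the theorem follows.
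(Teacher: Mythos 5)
Your proposal is correct and takes essentially the same route as the paper, which likewise reduces the problem via Theorem~\ref{theorem:Reduction}, observes that constant doubling dimension makes each $H_k$ growth-bounded, and invokes Schneider--Wattenhofer \cite{SchneiderWattenhofer08} to compute an MIS of each (vertex-disjoint) class in parallel in $O(\log^* n)$ rounds --- your packing lemma simply fills in the step the paper dismisses as ``easy to see.'' One small correction: the standard packing bound for points pairwise separated by $\Theta(c_0^k r_0)$ inside a ball of radius $O(t \cdot c_0^k r_0)$ is $t^{O(\mathrm{ddim})}$ rather than $2^{O(t \cdot \mathrm{ddim})}$; the former is what matches the polynomial growth-bound definition you yourself state, and it is what the packing argument you invoke actually yields.
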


The lack of lower bounds for problems in the $\mathcal{CONGEST}$ model on a
clique network essentially means it might be possible to solve
\textsc{CliqueFacLoc} via even faster algorithms. It may, for example, be
possible to compute an $s$-ruling set, for constant $s > 2$, in time
$o(\log \log n)$; this would lead to an even faster constant-approximation
for \textsc{CliqueFacLoc}. This is a natural avenue of future research suggested
by this work.

Another natural question suggested by our expected-$O(\log \log n)$-round
algorithm for computing a $2$-ruling set on a subgraph of a clique is whether or
not an algorithm this fast exists for computation of a maximal independent set
($1$-ruling set), in the same setting, also. The analysis of our algorithm
depends very significantly on the fact that when a node is added to our
solution, not only its neighbors but \textit{all} nodes in its $2$-neighborhood
are removed. Thus MIS computation, and additionally $(\Delta + 1)$-coloring, in
$O(\log \log n)$ rounds on a spanning subgraph of a clique are examples of other
open problems suggested by this work.

\bibliographystyle{splncs03}
\bibliography{DistCompCliqueFacLoc}

\end{document}